\newcommand{\N}{\mathbb{N}}
\newcommand{\Z}{\mathbb{Z}}
\newcommand\Alt{\textrm{Alt}}
\newcommand\Sofic{\textrm{Sofic}}
\newcommand\supp{\textrm{supp}}
\newcommand{\vstack}[2]{{\genfrac{}{}{0pt}{}{#1}{#2}}}
\DeclareMathSymbol{\rightarrow}{\mathord}{symbols}{"21}
\title{Subshifts Defined by Nondeterministic and Alternating Plane-walking Automata}
\titlerunning{Nondeterministic and Alternating Plane-walking Automata}
\date{}
\author{Benjamin {Hellouin de Menibus}}{Université Paris-Saclay, CNRS, Laboratoire Interdisciplinaire des Sciences du Numérique, 91400, Orsay, France \and \url{https://www.lisn.upsaclay.fr/~hellouin/}}{hellouin@lisn.fr}{https://orcid.org/0000-0001-5194-929X}{}
\author{Pacôme Perrotin}{Université Paris-Saclay, CNRS, Laboratoire Interdisciplinaire des Sciences du Numérique, 91400, Orsay, France \and \url{https://www.pacomeperrotin.com}}{pacome.perrotin@gmail.com}{https://orcid.org/0000-0003-1197-2676}{}
\authorrunning{B. Hellouin de Menibus and P. Perrotin}
\keywords{Formal languages, Finite automata, Subshifts, Symbolic dynamics, Tilings} 
\begin{document}

\maketitle

\setlength\epigraphrule{0pt}
\epigraph{\textit{Il voulut être sofic,\\ il ne fut que pompé.}}{}

\begin{abstract}
     \emph{Plane-walking automata} were introduced by Salo \& Törma to recognise languages of two-dimensional infinite words (subshifts), the counterpart of $4$-way finite automata for two-dimensional finite words. We extend the model to allow for nondeterminism and alternation of quantifiers. We prove that the recognised subshifts form a strict subclass of sofic subshifts, and that the classes corresponding to existential and universal nondeterminism are incomparable and both larger that the deterministic class. We define a hierarchy of subshifts recognised by plane-walking automata with alternating quantifiers, which we conjecture to be strict. 
\end{abstract}

\section{Introduction}

One-dimensional finite automata are a classical model to recognise languages of finite words. They have been extended to recognise languages of finite patterns in two and more dimensions, often called \emph{picture languages}, starting with the work of Blum and Hewitt in 1967~\cite{blum1967}. While the one-dimensional model is very robust to changes in definition, this is not the case in higher dimensions and many different models have been introduced with varying computational power; see \cite{kari2011survey} for a survey that focuses on the non-deterministic case.

Symbolic dynamics are concerned with subshifts, which are languages of infinite words or patterns. In dimension 1, sofic subshifts can be seen as the infinite counterparts to regular languages, and have three equivalent definitions: the set of infinite walks on a labelled graph (finite automaton without initial nor final states); the set of infinite words avoiding a regular set of forbidden subwords; the letter-to-letter projection of a subshift of finite type. The latter definition carries through to higher dimensions without difficulties to define higher-dimensional sofic subshifts.

There are many ways to extend regular languages to higher dimensions.
Some models such as \emph{$4$-way automata} keep the notion of a linear \emph{run} where the head reads the input pattern, letter by letter, by moving ("walking") over the pattern; this contrasts with models such as \emph{recognisable languages} where acceptance consists in checking local constraints instead of a run. Broadly speaking, the latter models are more powerful than the former, and the same phenomenon arises for languages on trees: tree automata vs. tree-walking automata. \cite{giamma97} provides a catalogue of the different kinds of models, while a more recent survey on the "walking" models can be found in \cite{kari2011survey}.
  
When applying these models to subshifts, recognisable languages yields an alternative definition of sofic subshifts; this was first done in \cite{jonoska09} to our knowledge, and we present this construction in Section \ref{sec:soficautomata}. Recently, Salo and Törma introduced \emph{plane-walking automata} (PWA)~\cite{planewalking}, a particular case of graph-walking automata, which are a counterpart of $4$-way automata for infinite patterns. In particular, they proved that deterministic plane-walking automata define a class of subshifts that is strictly between subshifts of finite type and sofic subshifts, extending to infinite patterns a result from \cite{kari2001new} on $4$-way automata. It is also proved in \cite{kari2001new} that, for finite patterns, nondeterministic $4$-way automata are strictly more powerful than the deterministic version and that languages of alternating $4$-way automata are incomparable with recognisable languages, a sharp contrast with the one-dimensional case. 

In this paper, we introduce nondeterminism and alternations to plane-walking automata and consider the classes of higher-dimensional subshifts obtained this way. We prove that subshifts accepted by $\Sigma_1$ PWA (existential nondeterminism) and $\Pi_1$ PWA (universal nondeterminism) form incomparable classes (Theorem~\ref{thm:main}), both strictly larger than the deterministic case, and that subshifts accepted by arbitrary alternating PWA still form a strict subclass of sofic subshifts (Theorem~\ref{thm:soficAlt}). This yields two new classes of higher-dimensional subshifts that are intermediate between finite type and sofic subshifts, and natural in that they generalise one-dimensional sofic subshifts. We introduce an alternating hierarchy of nondeterministic power from deterministic up to unbounded alternating plane-walking automata, and conjecture that this hierarchy of subshifts does not collapse; to our knowledge, this is not known even for finite words. Our proofs involve equivalents of the pumping lemma for two-dimensional infinite patterns.

Definitions and results are written with the two-dimensional case ($\Z^2$) in mind, although they extend easily to any higher dimension, and our definitions also extend to any finitely generated groups (see \cite{groupwalking}).

\section{Configurations and Subshifts}

\subsection{Symbolic Dynamics}

We call \emph{positions} the elements of $\Z^2$, which is endowed with the Manhattan distance $d$.
We use $\rightarrow\ = (1,0)$ and $\uparrow\ = (0,1)$, $\bullet = (0,0)$ and so on; we often write $0$ as a position instead of $(0,0)$.

Let $\Sigma$ be a finite set called \emph{alphabet}. A \emph{configuration} $x$ is an element of $\Sigma^{\Z^2}$, while a \emph{pattern} $p$ is an element of $\Sigma^S$ for some finite $S\subset \Z^2$, called the support of $p$ and denoted $\supp(p) = S$. For $i \in \Z^2$, $\sigma^i(x)$ is a new configuration shifted by $i$, i.e. $(\sigma^i(x))_{j} = x_{i+j}$. In dimension one, we call a pattern a \emph{word}.

Given $\pi:\Sigma' \to \Sigma$, we extend $\pi$ to $\Sigma'^{\Z^2}$ by putting $\pi(x)_p = \pi(x_p)$ for all $p\in\Z^2$. 

A \emph{subshift} $X$ is a set of configurations that is closed in the product topology and invariant by all shifts; more concretely, it is defined as the set of configurations where no pattern from a set of forbidden patterns $\mathcal F$ appears.

A subshift defined by a finite set of forbidden patterns is called \emph{of finite type} (SFT for short). By a standard technique of higher block coding, replacing the alphabet $\Sigma$ by $\Sigma^{[0,n]\times [0,n]}$ for $n$ large enough, we can assume without loss of generality that forbidden patterns all have support $\{\bullet,\rightarrow\}$ or $\{\bullet,\uparrow\}$, and we do throughout the paper.

A subshift that can be written as $\pi(X)$, where $X$ is an SFT on alphabet $\Sigma'$ and $\pi:\Sigma' \to \Sigma$, is called \emph{sofic}. In dimension one, sofic subshifts have alternative definitions as the set of bi-infinite walks on some labelled graph or as the set of bi-infinite words avoiding some regular set of forbidden words.

We denote the classes of SFT and sofic subshifts \textrm{SFT} and \textrm{Sofic}, respectively.
  
\subsection{Two-dimensional Automata}  

We define an abstract model of two-dimensional automata. We use different notions of acceptance in the paper, and add additional restrictions to the model as necessary. 

\begin{definition}
  A two-dimensional automaton on $\Z^2$ is a labelled directed graph $A = (V,E,\Sigma, D, I)$, where
  \begin{itemize}
  \item $V$ and $E$ are finite sets of vertices and edges, respectively, where $E \subset V^2\times \mathbb Z^2$ (we call the second component the direction);
  \item $\Sigma$ is a finite alphabet and $D : V \to \Sigma$ associates a symbol to each vertex. 
  \item $I \in V$ are the initial states and $D$ is a bijection from $I$ to $\Sigma$. For any $a\in\Sigma$, we denote $i_a$ the only state in $I$ such that $D(i_a)=a$.
  \end{itemize}

If the automaton is \emph{alternating}, then there is additionally a map $Q : V \to \{\forall, \exists\}$ associating a quantifier to each vertex.
\end{definition}

Notice that since $E$ is finite, the set of possible directions is a finite subset of $\Z^2$.
  
\subsection{Recognisable Picture Languages}\label{sec:soficautomata}

Recognisable languages, as introduced in \cite{reco}, are a possible extension of regular languages to higher-dimensional words (or \emph{picture languages}) as projections of local languages, which can be expressed using two-dimensional automata.

The same model applied on subshifts yields an alternative definition of sofic subshifts using two-dimensional automata. 

\begin{definition}\label{def:recognising}
Let $A$ be a non-alternating two-dimensional automaton whose set of directions is $\{\uparrow, \rightarrow\}$. Given a pattern or a configuration $x$, an \emph{recognising run} of $A$ on $x$ is a function $r : \supp(x) \to V$ such that:

\begin{itemize}
\item for all $i \in \supp(x), D(r(i)) = x_i$;
\item for all $i \in \supp(x)$ such that $i + \rightarrow \in \supp(x)$, there is an edge $(r(i), r(i + \rightarrow), \rightarrow)\in E$, and similarly for $\uparrow$. 
\end{itemize}
Then:
\begin{itemize}
\item the recognised language $R_f(A)$ is the set of all patterns that admit a recognising run.
\item the recognised subshift $R_\infty(A)$ is the set of all configurations that admit a recognising run.
\end{itemize}
\end{definition}

Notice that this definition is intrinsically  nondeterministic in the choice of the run and makes no use of initial states, so we omit $I$ in this section. The first equivalence of the following result appeared in \cite[Proposition 7]{jonoska09}; we provide a short proof in our framework.

\begin{proposition}\label{prop:soficautomata}
For a subshift $X$, the following are equivalent:
\begin{enumerate}
\item $X$ is sofic;
\item $X = R_\infty(A)$ for some automaton $A$;
\item $X$ is defined by a set of forbidden patterns $R_f(A)^c$ for some automaton $A$,
\end{enumerate}
where $A$ is assumed to satisfy the hypotheses of Definition~\ref{def:recognising}.
\end{proposition}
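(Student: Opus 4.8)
The plan is to prove the cycle of implications $1 \Rightarrow 2 \Rightarrow 3 \Rightarrow 1$, using throughout a dictionary between two-dimensional automata satisfying Definition~\ref{def:recognising} and the SFT covers that witness soficity. For $1 \Rightarrow 2$, write $X = \pi(Y)$ where $Y$ is an SFT over an alphabet $\Sigma'$ whose forbidden patterns all have support $\{\bullet,\rightarrow\}$ or $\{\bullet,\uparrow\}$ (the normalisation recalled just after the definition of SFT), and $\pi : \Sigma' \to \Sigma$. Take the automaton $A$ with $V = \Sigma'$, $D = \pi$, and an edge $(u,v,\rightarrow) \in E$ exactly when $\vstack{u}{v}$ is \emph{not} a forbidden pattern of $Y$ on support $\{\bullet,\rightarrow\}$, symmetrically for $\uparrow$. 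Unwinding Definition~\ref{def:recognising}, a recognising run of $A$ on a configuration $x$ is precisely a configuration $y \in Y$ with $\pi(y) = x$, so $R_\infty(A) = \pi(Y) = X$.

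The heart of the argument is the identity
\[
R_\infty(A) = \{\, x \in \Sigma^{\Z^2} : \text{every finite subpattern of } x \text{ lies in } R_f(A) \,\},
\]
which immediately yields the equivalence $2 \Leftrightarrow 3$, since a configuration avoids the forbidden set $R_f(A)^c$ iff every pattern occurring in it belongs to $R_f(A)$. For the inclusion $\subseteq$ of the identity, note that recognising runs restrict: if $r$ is a recognising run on a pattern $p$ and $\supp(q) \subseteq \supp(p)$ with $q = p|_{\supp(q)}$, then $r|_{\supp(q)}$ is a recognising run on $q$, as the labelling condition and the two edge conditions are inherited; applied to a global run on $x$, this shows all finite subpatterns of $x$ lie in $R_f(A)$. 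For $\supseteq$, use compactness: given $x$ with every finite subpattern in $R_f(A)$, set for each finite $S \subset \Z^2$ the set $K_S = \{\, r \in V^{\Z^2} : r|_S \text{ is a recognising run of } A \text{ on } x|_S \,\}$. Each $K_S$ is clopen in the product topology and nonempty (extend any recognising run of $x|_S$ arbitrarily), and $K_S \subseteq K_T$ whenever $T \subseteq S$ (restriction again), so the family is directed; by compactness of $V^{\Z^2}$ the intersection $\bigcap_S K_S$ is nonempty, and any of its elements is a global recognising run of $A$ on $x$, i.e. $x \in R_\infty(A)$.

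Finally, for $3 \Rightarrow 1$: if $X$ is defined by the forbidden set $R_f(A)^c$, then by the identity above $X = R_\infty(A)$, and we reverse the construction of $1 \Rightarrow 2$. Let $Y \subseteq V^{\Z^2}$ be the SFT forbidding the pattern $\vstack{u}{v}$ on support $\{\bullet,\rightarrow\}$ for every non-edge $(u,v,\rightarrow) \notin E$ (symmetrically for $\uparrow$), and let $\pi = D$. A configuration $r \in V^{\Z^2}$ lies in $Y$ iff it satisfies all the edge conditions iff it is a recognising run of $A$ on $D \circ r$; hence $\pi(Y) = R_\infty(A) = X$ and $X$ is sofic. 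Every step is a direct translation between the edges of $A$ and the adjacency constraints of an SFT; the only non-routine point is the $\supseteq$ inclusion of the displayed identity, i.e. patching finite runs into a global one, which is the standard compactness (König's lemma) argument and should present no real difficulty beyond setting up the directed family $\{K_S\}$ correctly.
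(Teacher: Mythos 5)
Your proof is correct and follows essentially the same route as the paper: the same automaton-from-SFT-cover and SFT-from-automaton translations, plus the restriction/compactness argument identifying $R_\infty(A)$ with the configurations all of whose patterns lie in $R_f(A)$. The only difference is cosmetic (organising the implications as a cycle and spelling out the compactness step that the paper calls standard).
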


\begin{proof}
$(1\Rightarrow2)$ Let $Y$ be a SFT on alphabet $\Sigma'$ given by a finite set $\mathcal F$ of forbidden patterns and $\pi : \Sigma'\to\Sigma$ be such that $\pi(Y) = X$; remember that we assume that patterns in $\mathcal F$ have support $\{\bullet, \uparrow\}$ or $\{\bullet, \rightarrow\}$. We define $A = (V,E,\Sigma',D)$ by setting $V = \Sigma'$ and $D = \pi$, and $E$ is defined as follows: for all $p\in \Sigma'^{\{\bullet, \uparrow\}}$, $(p_\bullet, p_\uparrow, \uparrow)\in E$ if and only if $p\notin \mathcal F$, and similarly for $\rightarrow$. By construction, $y \in \Sigma'^{\Z^2}$ is a recognising run of $A$ on $x$ if and only if $y\in Y$ and $\pi(y) = x$, so $X = R_\infty(A)$.

$(2\Rightarrow1)$. Let $A = (V,E,\Sigma',D)$ be an automaton and define the SFT $Y \subset V^{\Z^2}$ by the set of forbidden patterns:
\[
\mathcal F = \{p\in V^{\{\bullet,\rightarrow\}}:(p_\bullet, p_\rightarrow, \rightarrow)\notin E\}) \cup \{p\in V^{\{\bullet,\uparrow\}} : (p_\bullet, p_\uparrow, \uparrow) \notin E\}.\]
Again by construction, $y$ is a recognising run of $A$ on $x$ if and only if $y\in Y$ and $D(y) = x$, so $X = D(Y)$ is sofic.

$(2\Leftrightarrow 3)$ The two statements are equivalent for any automaton $A$. If $x \in R_\infty(A)$ has a recognising run, then the restriction of this run to any finite pattern in $x$ is recognising, so all patterns in $x$ belong to $R_f(A)$. Conversely, if all patterns in $x$ admit a recognising run, then $x$ does as well by a standard compactness argument.
\end{proof}

Notice that the third condition involves complementation, in contrast with the one-dimensional case; recognisable languages are not closed by complement in dimension 2 \cite{anselmo10}.

\section{Plane-walking Automata and Associated Subshifts}

Plane-walking automata generalise the definition of one-dimensional sofic subshifts seen as the set of infinite walks on a labelled graph.

\subsection{Definitions}

The automata we use in this section  correspond to the plane-walking automata (PWA) from Salo and Törma \cite{planewalking} with one head and added nondeterminism. They are alternating two-dimensional automata with a specific acceptance condition, that we call alternating plane-walking automata to be consistent with the literature.

\begin{definition}[Subshifts defined by plane-walking automata]\label{def:PWA}
Let $A = (V,E,\Sigma,D,I,Q)$ be an alternating PWA. Given $x \in \Sigma^{\Z^2}$, $p\in\Z^2$ and $v\in V$, there is an accepting run on $x$ starting from $(p, v)$ if $D(v) = x_p$ and:

\begin{itemize}
\item $Q(v) = \exists$ and there is an edge $(v, v', d) \in E$ and an accepting run starting from $(p+d, v')$, or
\item $Q(v) = \forall$ and all edges $(v, v', d) \in E$ with $D(v') = x_{p+d}$ have an accepting run starting from $(p+d, v')$; furthermore, there must be one such edge.
\end{itemize}

A configuration $x$ is \emph{accepted} by $A$ if, for every $p\in\Z^2$, there is an accepting run of $A$ on $x$ starting from $(p, i_{x_p})$. The set of configurations accepted by $A$ is a subshift, denoted $L_\infty(A)$. 
\end{definition}

The lack of a base case in the previous definition means that an accepting run must be infinite; in other words, a run accepts if it never reaches a position and state with no possible move. The model of Salo and Törma used rejecting states, which we opted not to do by symmetry w.r.t the lack of accepting states; this is a stylistic choice as a rejecting state can be simulated by a state with no outgoing edge, and an accepting state by a state with a loop with direction $\bullet$.

An alternating plane-walking automaton and its associated subshift are illustrated in Figure~\ref{even_automata_figure}. Without loss of generality, by adding additional states, the set of directions can be restricted to $\{\uparrow, \downarrow, \leftarrow, \rightarrow, \bullet\}$ which we assume in the rest of this article.

\begin{figure}[t]
  \centering
  \begin{minipage}{0.49\textwidth}
    \centering
    \begin{tikzpicture}  [->, node distance = 1cm]
  \node[state] (a) {$1$};
  \node[state] (b) [right =of a] {$1$};
  \node[state] (f) [above =of a] {$1$};
  \node[state] (g) [right =of f] {$0, \forall$};
  \node[state] (c) [right =of g] {$1$};
  \node[state] (d) [above =of c] {$0$};
  \node[state] (e) [left =of d] {$1$};

  \node (input0) [above =.5 of a] {};
  \node (input1) [above =.5 of g] {};
 	\path
      (input0) edge (a)
      (input1) edge (g)
      (g) edge node [left] {$\rightarrow$} (b)
      (c) edge node [right] {$\rightarrow$} (d)
      (b) edge [bend right = 20] node [right] {$\rightarrow$} (c)
      (c) edge [bend right = 20] node [left] {$\rightarrow$} (b)
      (g) edge node [below] {$\uparrow$} (f)
      (e) edge node [below] {$\uparrow$} (d)
      (f) edge [bend right = 20] node [right] {$\uparrow$} (e)
      (e) edge [bend right = 20] node [left] {$\uparrow$} (f)
      (g) edge node [below right, xshift = -0.2cm, yshift = 0cm] {$\uparrow \rightarrow$} (d);

	\draw (a) to [out=20, in=-20, looseness=7] node [right] {$\bullet$} (a);
	\draw (d) to [out=20, in=-20, looseness=7] node [right] {$\bullet$} (d);
\end{tikzpicture}
  \end{minipage}
  \begin{minipage}{0.39\textwidth}
    \centering
    \begin{tikzpicture}  [->]
  \foreach \x in {1, ..., 10} {
    \foreach \y in {1, ..., 8} {
      \node[gray] at (0.4*\x, 0.4*\y) {$0$};
    }
  }

  \foreach \x\y in {
    1/7, 3/7, 1/6, 3/6,
    7/7, 6/6, 8/6, 6/5, 8/5,
    7/4, 5/3, 5/2
  } {
    \fill[white!90!black] (0.4*\x - 0.2, 0.4*\y - 0.2) rectangle ++(0.8, 0.4);
    \node[black] at (0.4*\x, 0.4*\y) {$1$};
    \node[black] at (0.4*\x + 0.4, 0.4*\y) {$1$};
  }

  \draw (0, 0) rectangle (0.4 * 11, 0.4 * 9);
\end{tikzpicture}
  \end{minipage}
  \caption{\label{even_automata_figure}
    On the left, an example of an alternating plane-walking automaton.
    On the right, a finite pattern of a configuration accepted by
    the automaton. The associated subshift is the subshift where all vertical and horizontal runs of $1$s are
    either of even size or infinite.
  }
\end{figure}

We denote $\Alt$ the class of subshifts $X$ such that $X = L_\infty(A)$ for some alternating PWA $A$.
It is not difficult, as in Proposition~\ref{prop:soficautomata}, to extend the notion of acceptance to finite patterns (considering a run as accepting when it leaves the support of the pattern), where it coincides with alternating 4-way picture-walking automata. Subshifts in $\Alt$ can equivalently be defined by a set of forbidden patterns that are the complement of the language of some alternating PWA. As for recognisable languages, these languages are not closed by complement \cite[Theorem 7]{kari2011survey}.

Plane-walking automata as considered by Salo \& Törma \cite{planewalking} are deterministic, in the sense that there is only one outgoing edge from each state on which the run does not fail immediately; therefore the quantifiers in the definition are unused. We denote by $\Delta_1$ the class of subshifts defined by such deterministic plane-walking automata, and we call deterministic every state where the quantifier is not relevant, so we omit it in the pictures for clarity.

\begin{definition}[Branch, Footprint]
A run on $x$ can be represented by a tree where each vertex corresponds to a current position and state.

A \emph{branch} of a run is a branch in that tree, in the usual sense. 

The \emph{footprint} of a subtree or a branch is the set of all visited positions.
\end{definition}

\subsection{Comparison with SFT and Sofic Subshifts}

We show that subshifts defined by alternating plane-walking automata are an intermediate class between the two classical classes of SFT and sofic subshifts. Notice that, in dimension one, the classical powerset construction tells us that added nondeterminism does not impact the power of the finite automata, so every class defined in this article coincide with sofic subshifts.

The next result is proved in \cite{planewalking} (Inclusion is stated without proof,  Strictness is Lemma~1; Inclusion also follows from Lemma~\ref{lem:inter-SFT} below).

\begin{proposition}
$SFT \subsetneq \Delta_1\subset \Alt$.
\end{proposition}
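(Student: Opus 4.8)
The plan is to prove the two inclusions and the strict separation one at a time; none of them is hard, so the main point to watch is that the automaton built for $SFT \subseteq \Delta_1$ really is deterministic in the sense of Salo \& Törma.

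First I would dispatch $\Delta_1 \subseteq \Alt$, which is immediate: a deterministic PWA is an alternating PWA in which the quantifiers may as well all be set to $\exists$, because determinism means that from each state at most one outgoing edge does not fail immediately on a given input, and then the $\exists$-clause and the $\forall$-clause of Definition~\ref{def:PWA} describe the same runs.

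Next, for $SFT \subseteq \Delta_1$, I would start from an SFT with forbidden patterns of support $\{\bullet,\rightarrow\}$ or $\{\bullet,\uparrow\}$ — legitimate after higher block coding, as the paper already assumes — and build a deterministic PWA whose run from $(p, i_{x_p})$ first steps $\rightarrow$, the outgoing edge being present only if the pair $(x_p, x_{p+\rightarrow})$ is allowed, then steps $\uparrow$, checking the vertical constraint at the new position, and then enters a state with a single $\bullet$-self-loop so that the run is infinite. Concretely, the states are $I$ together with copies $h_a$ and $w_a$ of the alphabet, with $D$ sending each to its letter; the edges are $(i_a, h_b, \rightarrow)$ and $(h_b, w_c, \uparrow)$ exactly for the allowed pairs, plus $(w_c, w_c, \bullet)$ for every $c$. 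I would then check that from every state at most one outgoing edge does not fail immediately (the one whose target carries the letter actually read), so the automaton is deterministic, and that the run from $p$ is infinite precisely when the horizontal constraint at $p$ and the vertical constraint at $p + \rightarrow$ both hold; ranging over all $p$ then recovers exactly the SFT, so $X = L_\infty(A) \in \Delta_1$.

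Finally, for the strict inclusion I would exhibit the subshift $X \subseteq \{0,1\}^{\Z^2}$ of configurations whose every row contains at most one $1$. It lies in $\Delta_1$ via the three-state automaton with $D(i_0) = 0$, $D(i_1) = 1$, $D(s) = 0$ and edges $(i_0, i_0, \bullet)$, $(i_1, s, \rightarrow)$, $(s, s, \rightarrow)$: the run from $(p, i_0)$ loops forever, the run from $(p, i_1)$ is infinite iff $x_{p + k\rightarrow} = 0$ for all $k \ge 1$, hence $x$ is accepted iff no row carries two $1$s. It is not an SFT by a window/pumping argument: given any finite candidate set $\mathcal{F}$ of forbidden patterns, all supported within a ball of radius $N$, the configuration $x^\star$ with $1$s only at $(0,0)$ and $(0, 2N+1)$ has each translate of such a support meeting at most one of the two $1$s, so each window of $x^\star$ carries at most one $1$; any $\{0,1\}$-pattern with at most one $1$ occurs in the configuration equal to it on its support and $0$ elsewhere, which is in $X$, so no window of $x^\star$ lies in $\mathcal{F}$; thus $x^\star$ would belong to the SFT defined by $\mathcal{F}$, contradicting $x^\star \notin X$. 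I expect this last argument, together with the determinism bookkeeping in the previous step, to be the only places that need care — the rest is definition-chasing.
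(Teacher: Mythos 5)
Your proof is correct in outline and, unlike the paper, is self-contained: the paper does not reprove this statement at all, but cites Salo--Törmä \cite{planewalking} (strictness is their Lemma~1, inclusion is stated there without proof) and observes that $SFT \subseteq \Delta_1$ also falls out of Lemma~\ref{lem:inter-SFT} by intersecting the full shift, which is trivially in $\Delta_1$, with the given SFT. Your explicit constructions do the same job directly: the inclusion $\Delta_1 \subset \Alt$ is indeed just definition-chasing, your three-layer automaton for $SFT \subseteq \Delta_1$ checks the horizontal constraint at $p$ and the vertical constraint at $p+\rightarrow$ and, ranging over all starting positions, this covers every adjacency, with the determinism bookkeeping exactly as needed (at most one non-failing outgoing edge per state, since the targets carry distinct letters).

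One slip to fix in the strictness argument: with the paper's conventions $\rightarrow = (1,0)$ and $\uparrow = (0,1)$, the positions $(0,0)$ and $(0,2N+1)$ lie in the same \emph{column}, hence in different rows, so the configuration $x^\star$ as you wrote it belongs to $X$ (every row still has at most one $1$) and no contradiction arises. The two $1$s must sit in the same row at horizontal distance exceeding the window size, e.g.\ at $(0,0)$ and $(2N+1,0)$. With that correction every translate of a support contained in a ball of radius $N$ meets at most one of the two $1$s, each window of $x^\star$ occurs in a configuration of $X$ and hence is not forbidden, so $x^\star$ would satisfy any alleged finite-type presentation of $X$ while having two $1$s in one row --- the contradiction you intended, and the rest of your argument goes through verbatim.
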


The following result appeared in \cite{kari2001new} (Theorem~1) for finite patterns. We translate the proof in our framework since our statements differ due to differing models.

\begin{proposition}
  $\Alt\subset \Sofic$.
\end{proposition}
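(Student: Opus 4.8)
The plan is to translate Kari's construction: I build an SFT over an enlarged alphabet whose points carry, on top of a configuration $x$, a guess of ``from which states does an accepting run start here'', and then recover $L_\infty(A)$ as a letter-to-letter projection of that SFT, which is by definition sofic.

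The first step is to reformulate acceptance in a way that gets rid of the tree structure of runs. Fix $x \in \Sigma^{\Z^2}$ and call a map $T : \Z^2 \to \mathcal{P}(V)$ \emph{coherent} if for every position $p$ and every $v \in T(p)$ we have $D(v) = x_p$ and, moreover, if $Q(v) = \exists$ there is an edge $(v,v',d) \in E$ with $v' \in T(p+d)$, while if $Q(v) = \forall$ there is an edge $(v,v',d) \in E$ with $D(v') = x_{p+d}$ and every such edge satisfies $v' \in T(p+d)$. The key claim is: there is an accepting run of $A$ on $x$ starting from $(p,v)$ if and only if $v \in T(p)$ for some coherent $T$. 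From an accepting run tree one reads off a coherent $T$ by collecting, at each position, all the states labelling a node there. For the converse one checks that along a coherent $T$ the automaton can always move while staying inside $T$ (an $\exists$-state has such a move by definition; a $\forall$-state has all of its compatible moves inside $T$, and at least one exists), so the resulting run never halts, hence is infinite and therefore accepting. This is the step I expect to be the main conceptual obstacle: it is exactly where the coinductive / greatest-fixpoint nature of Definition~\ref{def:PWA} must be handled, and it relies crucially on there being no acceptance or fairness condition beyond ``the run does not get stuck''.

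Next I observe that coherent maps are closed under pointwise union, so the union of the coherent witnesses associated with the various starting pairs $(p, i_{x_p})$ is itself coherent. Combined with the previous claim, this gives the clean characterisation: $x \in L_\infty(A)$ if and only if there exists a single coherent $T : \Z^2 \to \mathcal{P}(V)$ with $i_{x_p} \in T(p)$ for every $p \in \Z^2$.

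Finally I encode this characterisation. Let $Y \subseteq (\Sigma \times \mathcal{P}(V))^{\Z^2}$ be the set of $(x,T)$ such that $T$ is coherent and $i_{x_p} \in T(p)$ for all $p$. Since the set of directions is contained in $\{\bullet, \uparrow, \downarrow, \leftarrow, \rightarrow\}$, each defining constraint only mentions a position and its four orthogonal neighbours, so $Y$ is cut out by finitely many forbidden patterns of bounded support; after the standard block recoding this is an SFT in the normal form of Section~\ref{sec:soficautomata}. Letting $\pi : \Sigma \times \mathcal{P}(V) \to \Sigma$ be the projection onto the first coordinate, the characterisation above says precisely that $\pi(Y) = L_\infty(A)$, so $L_\infty(A)$ is sofic. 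The remaining points (that every constraint is indeed local, that the recoding produces the required normal form) are routine.
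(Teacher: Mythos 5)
Your proposal is correct and is essentially the paper's own construction: the coherent maps $T$ are exactly the $\mathcal{P}(V)$-layer of the paper's SFT $Y \subset (\Sigma\times\mathcal P(V))^{\Z^2}$, with your local coherence clauses matching its forbidden-pattern conditions and the coinductive unfolding argument matching its two inclusions. The only cosmetic difference is that you make the greatest-fixpoint reasoning and the closure of coherent maps under union explicit, where the paper instead directly takes $S_i$ to be the set of all states admitting an accepting run from position $i$.
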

\begin{proof}
Let $A = (V,E,\Sigma, D, I, Q)$ be an alternating PWA. Let $\Sigma' = \Sigma \times \mathcal P(V)$, where a symbol from $\Sigma'$ encodes the set of states starting from which there is an accepting run in the current position. Let $\pi_1$ and $\pi_2$ be the projections on the first and second component, respectively.
    
We define a SFT $Y \subset \Sigma'^{\Z^2}$ by the following set $\mathcal F$ of forbidden patterns: for $p \in \Sigma'^{\{\leftarrow, \rightarrow, \uparrow, \downarrow, \bullet\}}$, denote $p_\bullet = (s, S)$ where $s\in \Sigma$ and $S\in \mathcal P(V)$. Then $p\in \mathcal F$ if and only if:

\begin{enumerate}
 \item $\exists v\in S, D(v) \neq s$, or 
 \item $S\cap I = \emptyset$, or
 \item there is $v\in S$ such that
    \begin{itemize}
    \item $Q(v) = \exists$ and for all $(v,v',d)\in E$, we have $p_d = (s', S')$ with $v'\notin S'$.
    \item $Q(v) = \forall$ and there is $(v,v',d)\in E$ such that $p_d = (s', S')$ with $v'\notin S'$.
    \end{itemize}
\end{enumerate}
    
We claim that $L_\infty(A) = \pi_1(Y)$. 

Given a configuration $y\in Y$, we prove inductively the following statement: there is an accepting run starting from $(i,v)$ on $\pi_1(y)$  for all $i\in \Z^2$ and $v\in \pi_2(y_i)$. If $Q(v) = \exists$, then Condition 3 ensures we find an edge $(v, v', d)$ with $v'\in \pi_2(y_{i+d})$. Condition 1 and iterating this argument yields an accepting run starting from $(i+d, v')$. A similar argument works for $\forall$. Condition $2$ ensures that there is an initial state in $\pi_2(y_i)$, so $\pi_1(y)$ is accepted by $A$.

Conversely, given a configuration $x \in L_\infty(A)$, we define $y_i = (x_i, S_i)$ where $S_i \subset V$ is the set of states $v$ such that there is an accepting run on $x$ from $(i, v)$. Since $x$ admits an accepting run from any position for some initial state, it is easy to see that all three conditions above are satisfied and $y\in Y$.
\end{proof}

The proof of the following result is due to Ville Salo (personal communication).

\begin{theorem}\label{thm:soficAlt}
$\Alt \subsetneq \Sofic$.
\end{theorem}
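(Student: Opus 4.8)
The plan is to exhibit an explicit sofic subshift that cannot be accepted by any alternating plane-walking automaton. The natural strategy is a diagonalization/counting argument: a plane-walking automaton has a fixed finite state set, so a run started at a position and state can only "see" information along a finite-branching tree of moves, and any branch that stays bounded must eventually repeat a (position relative to some landmark, state) pair. The key is to find a sofic subshift whose membership requires checking a global alignment or counting condition that defeats this locality — the classic example being a subshift built from Kari--Moore rectangles (referenced in the acknowledgements and Section~\ref{sec:squares}), where configurations encode arbitrarily large squares or rectangles whose dimensions must match in a way no finite-memory walking process can verify.

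First I would fix such a candidate subshift $X$, sofic by construction (it is the projection of an SFT enforcing local tiling rules that force nested rectangles), in which accepting a configuration morally requires certifying that two sides of a rectangle have equal length; crucially $X$ should contain configurations with rectangles of unbounded size, but also "almost-correct" configurations with a single defect that must be rejected and whose defect can be placed arbitrarily far from any structural landmark. Next I would assume for contradiction that $X = L_\infty(A)$ for an alternating PWA $A$ with state set $V$, $|V| = n$, and analyze an accepting run on a large correct configuration $x \in X$ with a rectangle of side length $\ell \gg n$. The heart of the argument is a pumping lemma for these infinite two-dimensional runs: along any branch of the accepting run, look at the sequence of visited (state, local-coordinate) pairs; if a branch travels a distance more than $n$ in some region that is locally uniform (a long monochromatic or periodic strip inside the rectangle), two configurations along it repeat, and one can excise or insert a segment of the strip. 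This transforms the accepting run on $x$ into an accepting run on a modified configuration $x'$ with a mismatched rectangle — which must not be in $X$ — giving the contradiction. The paper's own remark that "our proofs involve equivalents of the pumping lemma for two-dimensional infinite patterns" strongly suggests this is the intended route.

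The main obstacle, and where I would spend the most care, is handling \emph{alternation}: an accepting run is a tree, not a single path, and the universal branching means I cannot freely modify one branch in isolation — the modification of the configuration must be compatible with \emph{all} branches simultaneously, and the $\forall$-quantifier's "furthermore, there must be one such edge" clause interacts subtly with changes to the local alphabet. The standard fix is to cut along a whole "frontier" of the run tree rather than a single branch: identify a set of positions forming an antichain that separates the defect region from the rest, use a product-of-states / Ramsey-type argument (the state set being finite, a long enough frontier yields a repeated configuration profile across the whole cut), and splice the two sides of the run tree back together after deleting a strip. This is exactly the kind of argument that works for alternating tree-walking automata, so I would borrow that machinery. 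A secondary subtlety is ensuring the modified configuration genuinely leaves $X$ (not merely looks locally different) — here the rigidity of the Kari--Moore construction, which propagates any size mismatch to a detectable local violation, does the work, but one must verify the excised strip is wide enough that the mismatch is unavoidable.

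Finally, I would assemble the pieces: (i) recall that $X$ is sofic and contains arbitrarily large rectangles, (ii) state and prove the frontier-pumping lemma for alternating PWA runs, (iii) apply it to a sufficiently large rectangle to produce an accepting run on a non-element of $X$, (iv) conclude $X \notin \Alt$, hence $\Alt \subsetneq \Sofic$. I expect step (ii) to carry essentially all the difficulty, with step (iii) being a matter of choosing the rectangle size as a function of $|V|$ large enough to guarantee a repetition.
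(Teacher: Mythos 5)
Your plan hinges entirely on step (ii), a ``frontier-pumping lemma'' for runs of \emph{alternating} plane-walking automata, and this is precisely the missing idea: you never say how to prove it, and there is good reason to believe it is not available by the means you describe. An accepting run of an alternating PWA is an infinite tree in which a $\forall$-vertex must spawn a child for \emph{every} edge consistent with the configuration; when you excise or insert a strip, the set of mandatory children at universal vertices changes, so you cannot simply splice the two sides of the old run tree across a repeated cut profile --- new obligations appear that the old run never had to discharge, and branches may cross any fixed column unboundedly many times, so there is no bound on the ``crossing sequence'' data you would need Ramsey-type repetition for. The paper itself treats pumping against even $\Sigma_1$ and $\Pi_1$ automata as delicate (Propositions~\ref{prop-non-sigma} and \ref{prop-non-pi}), explicitly calls pumping for alternating PWA too tedious to carry out, and leaves the strictness of the alternation hierarchy open for exactly this reason; so asserting that the machinery ``works for alternating tree-walking automata, so I would borrow it'' does not close the gap. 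A second, independent problem is the choice of witness: Kari--Moore rectangles are engineered to separate existential from universal nondeterminism (the paper invokes them, with caveats, only for $\Sigma_1$ vs.\ $\Pi_1$ in Section~\ref{sec:squares}). The obstruction there is that a $\Sigma_1$ PWA crossing a width-$n$ strip induces a two-way \emph{nondeterministic} automaton with $O(n)$ states, which a quadratic-size mismatch defeats; an alternating PWA instead induces a two-way \emph{alternating} automaton, whose succinctness over DFAs is doubly exponential, so a rectangle subshift with polynomially mismatched side lengths gives no contradiction against $\Alt$. You would need far faster-growing parameters, and then soficity of the resulting subshift is itself a nontrivial claim you have not addressed.

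For contrast, the paper's proof (due to Salo) avoids pumping altogether: take a one-dimensional subshift $X$ that has a computable set of forbidden patterns but is not sofic; by the simulation theorems of Aubrun--Sablik or Durand--Romashchenko--Shen its lift $X^\uparrow$ (Definition~\ref{def:lift}) is a two-dimensional sofic subshift. If some alternating PWA accepted $X^\uparrow$, replacing every $\uparrow$ and $\downarrow$ by $\bullet$ would yield a two-way one-dimensional (alternating) automaton accepting $X$, forcing $X$ to be sofic --- a contradiction, whence $X^\uparrow \in \Sofic \setminus \Alt$. The work is thus delegated to two known theorems (the sofic realization of computable lifts, and the regularity of languages of two-way alternating finite automata) rather than to a new two-dimensional pumping lemma; if you want to salvage your route, you would have to first prove the frontier lemma for alternating runs, which is an open-ended research problem rather than a step of this theorem.
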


\begin{definition}\label{def:lift}
Given a one-dimensional subshift $X\subset \Sigma^\Z$, its two-dimensional \emph{lift} is defined as 
\[X^\uparrow = \{y \in \Sigma^{\Z^2} : \exists x\in X, \forall i,j, y_{i,j} = x_i\}.\]
\end{definition}

By the constructions of Aubrun-Sablik \cite{aubrun2013simulation} or Durand-Romaschenko-Shen \cite{durand2012fixed}, the lift of any one-dimensional subshift given by a computable set of forbidden patterns is a sofic subshift. 

\begin{lemma}
If $X^\uparrow$ is accepted by an alternating PWA, then $X$ is sofic.
\end{lemma}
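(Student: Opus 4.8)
Suppose $X^\uparrow = L_\infty(A)$ for some alternating PWA $A$. I want to extract, from the behaviour of $A$ on lifts, a finite-automaton description of $X$ (or equivalently a regular set of forbidden words), which by the one-dimensional theory makes $X$ sofic. The key structural observation is that a lift $x^\uparrow$ is constant along each vertical line, so a run of $A$ on $x^\uparrow$ sees only the horizontal word $x$, together with a bounded amount of ``vertical position'' information that cannot actually influence which symbol is read. The intuition is that vertical movement is useless: if there is an accepting run, there should be one that behaves ``periodically'' in the vertical direction, so we can quotient it down to a one-dimensional walking automaton on $x$.

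**Key steps.** First I would make precise the reduction from alternating PWA on $\Z^2$ restricted to lifts to alternating \emph{line}-walking automata on $\Z$: given $A$, build a one-dimensional alternating graph-walking automaton $A'$ on alphabet $\Sigma$ whose states are (roughly) the states of $A$, and whose transitions mimic those of $A$ with the vertical moves collapsed. The nontrivial point is that in $A$ a run can drift up or down; to collapse this I would use a pumping/folding argument on vertical excursions: along any vertical line the configuration is constant, so whenever a branch of the run returns to the same state at a different height, the segment between can be excised (or iterated) without affecting acceptance. This lets me replace ``reach state $v$ at height $k$'' by ``reach state $v$'', i.e. project onto a finite state set, with the vertical coordinate eliminated. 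Care is needed with $\forall$-states: for universal branching one must check that collapsing a vertical excursion in one branch does not break the sibling branches, but since all branches live over the same constant vertical line this is fine — the folding is uniform in the height. Second, once I have a one-dimensional alternating walking automaton $A'$ with $L(A') = X$, I invoke the fact (the one-dimensional analogue, essentially the powerset construction combined with removal of alternation for walking automata on $\Z$, or directly the classical result that $2$-way alternating automata on words are no more powerful than finite automata) that $X$ is then sofic. Concretely: one-dimensional two-way alternating finite automata accept exactly the regular languages, and the subshift version says $L_\infty(A')$ is sofic; cite or reprove this briefly.

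**Main obstacle.** The delicate part is the vertical-folding argument in the presence of alternation and of infinite (non-terminating) accepting runs. The definition of acceptance here has no base case — a run accepts iff it never gets stuck — so I must argue that vertical excursions can be folded while \emph{preserving infinitude} of every branch, not merely preserving a finite accepting computation. The natural tool is the pumping lemmas for two-dimensional infinite patterns that the introduction promises; I would use one to say that an accepting run on $x^\uparrow$, restricted to its interaction with a single column, either stays in a bounded band of heights (in which case the projection is immediate) or is eventually periodic in height, and in the periodic case the period gives the loop that lets the one-dimensional automaton simulate the infinite vertical behaviour with a $\bullet$-loop. Handling this uniformly across all the (possibly infinitely many, but finitely-many-up-to-state) branches of an alternating run is where the real work lies; the rest is bookkeeping translating a one-dimensional walking automaton into a sofic presentation.
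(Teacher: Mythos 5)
Your construction of $A'$ is exactly the paper's: replace every vertical move of $A$ by $\bullet$, read the result as a two-way one-dimensional alternating automaton, and invoke the one-dimensional fact that such automata only define regular languages / sofic subshifts. Where you depart from the paper is in the justification of the collapse, and there you have manufactured a difficulty that does not exist. Acceptance of a run depends only on the symbols read, and on a lift $x^\uparrow$ the symbol at position $(i,j)$ depends only on $i$; hence any (alternating, infinite) run of $A$ on $x^\uparrow$ becomes, step by step and with the same branching structure, a run of $A'$ on $x$ --- each vertical edge is simply taken as a $\bullet$ edge --- and conversely any run of $A'$ on $x$ lifts back to a run of $A$ on $x^\uparrow$ by reinstating the vertical displacements. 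No step is ever removed or added, so infinitude of every branch and the $\forall$-conditions are preserved verbatim, and both inclusions of ``$A'$ accepts exactly $X$'' follow with no surgery on runs at all. The excision/folding machinery you propose is therefore unnecessary, and as stated it is also shaky: a branch may drift and never revisit a column nor repeat a state at comparable heights, ``its interaction with a single column'' is not a well-defined sub-run, and nothing forces eventual periodicity in height --- so if that argument were genuinely needed, your proof would have a gap exactly where you locate the ``real work''. Once you observe that the collapse is literal rather than up-to-pumping, your proof and the paper's coincide, including the final appeal to the one-dimensional theory (which, as you note, deserves a sentence or a citation, since the acceptance condition is the subshift one rather than finite-word acceptance).
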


\begin{proof}
Let $A$ be the automaton that accepts $X^\uparrow$ in the sense that $X^\uparrow = L_\infty(A)$, and $A'$ be the automaton obtained by replacing every direction $\uparrow$ or $\downarrow$ by $\bullet$ in $A$. 

Since every configuration in $X^\uparrow$ is constant in the vertical direction, $A'$ accepts every configuration in $X^\uparrow$ (as well as additional configurations that are not constant vertically). Since $A'$ only travels horizontally, it can be seen as a two-way one-dimensional automaton that accepts exactly the one-dimensional configurations that lift to $X^\uparrow$; in other words, $A'$ accepts $X$. It follows that $X$ is defined by a regular set of forbidden patterns, so it is sofic.\end{proof}

To prove Theorem~\ref{thm:soficAlt}, see that the lift $X^\uparrow$ of any non-sofic one-dimensional subshift $X$ given by a computable set of forbidden patterns is sofic and not accepted by any alternating PWA. It is well-known that such subshifts exist; see e.g. \cite[Example 3.1.7]{LM}.

\section{Alternating Hierarchy for Plane-walking Automata}

In this paper, we are interested in comparing the power of plane-walking automata with varying access to nondeterminism. We introduce an alternating hierarchy of subshifts, similar to the classic alternating hierarchies of propositional logic formul\ae. We are not able to prove that this forms a strict hierarchy of subshifts, but we show that the hierarchy does not collapse at the first level.

\subsection{Definitions}

Starting from $\Delta_1$ (deterministic) PWA, we define $\Sigma_1$, resp. $\Pi_1$, PWA as the existential, resp. universal, PWA, that is, all states are labelled by $\exists$ quantifiers, resp. $\forall$ quantifiers. We call $\Sigma_1$ and $\Pi_1$ the corresponding classes of subshifts.

By definition, $\Delta_1 \subset \Sigma_1 \cap \Pi_1$. The rest of this paper is dedicated to show that $\Sigma_1$ and $\Pi_1$ are incomparable sets (and thus strictly larger than $\Delta_1$). First, we define the whole hierarchy inductively by using automata decompositions.

\begin{definition}[Decomposition] A two-dimensional automaton $A = (V,E,\Sigma, D, I, Q)$ has a \emph{decomposition} $(S, V \setminus S)$ for $S \subseteq V$ if there exists no path from $V \setminus S$ to $S$ in $E$. By extension, we call decomposition the pair of induced subautomata.
\end{definition}

\begin{definition}[$\Sigma_n$ and $\Pi_n$]
An alternating plane-walking automaton is $\Sigma_{n+1}$ if it admits a decomposition into a $\Sigma_1$-automaton and a $\Pi_{n}$-automaton. We denote $\Sigma_{n+1}$ the class of subshifts $X$ such that $X = L_\infty(A)$ for a $\Sigma_{n+1}$-automaton.

The definitions for $\Pi_{n+1}$ are symmetrical.
\end{definition}

Equivalently, a $\Sigma_n$ automata is such that the image by $Q$ of any path starting from $I$ is a word of $\exists^\ast\forall^\ast\dots$ with $n$ blocks ($n-1$ alternations). This justifies the following definition:

\begin{definition}[$\Delta_n$]
An alternating plane-walking automaton is $\Delta_{n}$ if the image by $Q$ of any path starting from $I$ is a word of $\{\exists,\forall\}^\ast$ with $n-1$ blocks ($n-2$ alternations). We denote by $\Delta_{n}$ the class of subshifts $X$ such that $X = L_\infty(A)$ for some $\Delta_n$ PWA $A$.
\end{definition}

Is is not difficult to see that $\Delta_n \subseteq \Sigma_n\subseteq \Delta_{n+1}$ and $\Delta_n\subseteq \Pi_n\subseteq \Delta_{n+1}$. We do not know whether $\Sigma_n \cap \Pi_n= \Delta_n$.

We are ready to state our main result:
  
\begin{theorem}\label{thm:main}
  $\Sigma_1 \not\subset \Pi_1$, and $\Pi_1 \not\subset \Sigma_1$. As a consequence, $\Delta_1 \subsetneq \Sigma_1$ and $\Delta_1 \subsetneq \Pi_1$.
\end{theorem}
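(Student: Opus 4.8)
To prove Theorem~\ref{thm:main} it suffices to exhibit two subshifts $X_\exists \in \Sigma_1 \setminus \Pi_1$ and $X_\forall \in \Pi_1 \setminus \Sigma_1$; the consequence $\Delta_1 \subsetneq \Sigma_1$ and $\Delta_1\subsetneq\Pi_1$ is then immediate, since $\Delta_1 \subseteq \Sigma_1 \cap \Pi_1$ and any subshift in $\Delta_1$ lies in both classes. The natural candidate for $X_\exists$ is a subshift asserting an \emph{existential} property along rows or columns --- something like ``every row contains at least one occurrence of a special symbol'' or, dually for $X_\forall$, ``no row contains two occurrences of a special symbol'' (equivalently ``every row is constrained everywhere''). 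Here the asymmetry between $\exists$ and $\forall$ in Definition~\ref{def:PWA} should be decisive: an existential run can guess a direction and commit to finding a witness, while a universal run is forced to branch over all locally-consistent successors and accept on all of them, which makes it bad at certifying the \emph{existence} of a faraway witness; and symmetrically a universal automaton can enforce a constraint uniformly everywhere, which an existential automaton cannot verify globally since its single branch only sees a thin path.

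\textbf{Key steps, in order.} First I would fix the two candidate subshifts precisely --- presumably $X_\exists$ is built on an alphabet like $\{0,1,\#\}$ with forbidden-pattern-style constraints forcing a $\#$ (or a $1$) somewhere in each column, and $X_\forall$ its ``complement-flavoured'' cousin forbidding a second marked symbol in each column --- and check the easy direction: construct an explicit $\Sigma_1$ PWA accepting $X_\exists$ (a walker that, from each cell, moves $\uparrow$ nondeterministically until it guesses it has reached the $\#$, then enters a $\bullet$-loop) and an explicit $\Pi_1$ PWA accepting $X_\forall$ (a walker that, from each cell, universally forks to check the local constraint and to continue scanning the column). Second, and this is the heart of the argument, I would prove $X_\exists \notin \Pi_1$ and $X_\forall \notin \Sigma_1$ by a pumping argument --- exactly the ``equivalents of the pumping lemma for two-dimensional infinite patterns'' the introduction advertises. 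For the $\Sigma_1$ lower bound: an accepting run of a $\Sigma_1$ PWA on a configuration is a \emph{single infinite branch} (no branching, since every state is $\exists$), with footprint a thin path; if the relevant witness can be placed arbitrarily far from the starting cell, then on a suitably large pattern the branch either never reaches the witness region --- so a pumped configuration violating the constraint is still accepted --- or it traverses a long enough corridor to pump a loop in the state $\times$ displacement space, again producing an accepted bad configuration. For the $\Pi_1$ lower bound one dualises: a $\Pi_1$ run is a tree all of whose branches must be infinite and accepting, and one argues that such a tree cannot ``search'' for a single distant witness --- pump a segment of a column to push the witness out of reach of every branch, or use a cut-and-paste surgery exchanging two far-apart column-segments to fool the automaton while the glued configuration leaves $X_\exists$.

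\textbf{Main obstacle.} The real difficulty is the two lower bounds, and in particular making the pumping/surgery argument work for PWA, whose runs are genuinely two-dimensional and can loop, backtrack, and (in the $\Pi_1$ case) branch. The subtlety is that a run's footprint need not be confined to one column or one row, so ``pumping a column'' may disturb parts of the run that wander off sideways; one has to choose the witness subshifts so that the constraint is ``$1$-dimensional'' (column-wise) and then argue that the automaton, having finitely many states and finitely many directions, can be forced into a periodic behaviour on a long vertical corridor --- a Ramsey/pigeonhole argument on pairs (state, row offset mod period) --- while controlling the horizontal excursions. Getting the quantifiers right in the $\Pi_1$ case is the most delicate point: one must exploit that a $\forall$-state accepts only if \emph{all} locally-consistent successors accept \emph{and at least one exists}, so that inserting a pumped block (which preserves all local patterns) cannot break acceptance, yet the block is long enough that no branch can have ``remembered'' the witness across it.
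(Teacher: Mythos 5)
Your high-level plan (two witness subshifts, easy membership constructions, pumping arguments for the two non-memberships, then conclude via $\Delta_1\subseteq\Sigma_1\cap\Pi_1$) is exactly the paper's strategy, but both of your concrete candidates fail, and they fail because of the acceptance semantics of Definition~\ref{def:PWA}: there are no accepting states, a run accepts precisely when it never gets stuck, so an infinite search that never finds its witness \emph{accepts} by default. Consequently ``every row/column contains at least one special symbol'' is not even a subshift (it is not closed: push the witness to infinity and take a limit), and since $L_\infty(A)$ is always a subshift no PWA of any kind accepts it; concretely, your proposed $\Sigma_1$ walker that ``moves $\uparrow$ until it guesses the $\#$'' has, in a column with no $\#$, a branch that climbs through $0$s forever without dying, hence accepts the bad configuration. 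The paper's $\Sigma_1\setminus\Pi_1$ witness, the cone labyrinth $X_{cl}$, is engineered exactly to dodge this: the exit must be reachable by $\nearrow,\rightarrow,\searrow$ steps inside a corridor bounded by $\#$-columns, so an existential branch that fails to find an exit necessarily hits a $\#$ within (width) steps and dies, and failure really is rejecting.

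Your $\Pi_1\setminus\Sigma_1$ candidate, ``no row contains two marked symbols'', is worse than not separating: it lies in $\Delta_1$. A deterministic automaton that loops on $0$s and, from each $1$, walks $\rightarrow$ through $0$s forever accepts every valid configuration and gets stuck at the second $1$ of an invalid row, so this subshift is in $\Delta_1\subseteq\Sigma_1$ and cannot witness $\Pi_1\not\subset\Sigma_1$. The constraint must be genuinely two-dimensional in the relative position of the two marks, which is why the paper uses the sunny-side-up subshift $X_{ssu}$ (at most one $1$ in the whole plane): a universal automaton can fan out over the four quarter-planes to hunt for a second $1$, while a single existential branch cannot search a two-dimensional region. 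Finally, the actual mathematical content of the theorem is the two non-membership proofs, which you only gesture at: for $X_{ssu}\notin\Sigma_1$ the paper must exhibit, from \emph{every} starting position, an accepting branch avoiding both marked positions, via pumping vectors extracted from runs on the all-zero and single-one configurations and a case split on whether those vectors are colinear; for $X_{cl}\notin\Pi_1$ it uses that a rejecting branch must visit the entrance \emph{and all} potential exit positions, decomposes it into stays and center crossings, pumps the crossings, and then re-places entrances and exits off the pumped footprint to keep the widened labyrinth valid while preserving the rejecting branch. Your sketches, besides resting on examples that cannot work, do not engage with these points; in particular both ``commit to finding a witness'' and ``push the witness out of reach of every branch'' tacitly assume that failing to find the witness causes rejection, which is false in this model.
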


In the next two subsections, we build two subshifts in $\Sigma_1 \backslash \Pi_1$ and $\Pi_1 \backslash \Sigma_1$, respectively.
We begin by a technical lemma:

\begin{lemma}
\label{lem:inter-SFT}
Let $X \in \Sigma_n$ be a subshift and $Y$ be a SFT. Then $X\cap Y\in\Sigma_n$. The same holds for $\Pi$ and $\Delta$.
\end{lemma}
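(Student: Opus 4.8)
The plan is to take an automaton $A$ with $L_\infty(A) = X$ (say $\Sigma_n$, the other cases being identical), and a finite set $\mathcal F$ of forbidden patterns with support $\{\bullet,\rightarrow\}$ or $\{\bullet,\uparrow\}$ defining $Y$, and build a new automaton $B$ that accepts $x$ iff $x \in X$ and $x$ contains no pattern of $\mathcal F$. The idea is that $B$ simulates $A$ but, at each position it visits, additionally performs a finite local check that the pattern of $\mathcal F$-relevant cells around that position is legal. Since the runs of $A$, taken over all starting positions, collectively visit every position of $\Z^2$ (the acceptance condition of Definition~\ref{def:PWA} requires an accepting run from $(p, i_{x_p})$ for every $p$), these local checks will certify that $x \in Y$ globally.

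Concretely, first I would enlarge the state set: replace each state $v \in V$ by states $(v, \text{check})$ that, upon entering position $p$, read $x_p$, then branch to sub-states that move to $p + \rightarrow$ and $p + \uparrow$, verify the symbols there, and return to $p$ (using the fact that we may freely use directions $\{\uparrow,\downarrow,\leftarrow,\rightarrow,\bullet\}$), failing (a state with no outgoing edge) if either pattern $(x_p, x_{p+\rightarrow})$ or $(x_p, x_{p+\uparrow})$ lies in $\mathcal F$. If the check succeeds the run proceeds to the $A$-successor of $v$ exactly as before. The key subtlety is the quantifier structure: the local check must be implemented with \emph{existential} states only (there is only one thing to verify, so no genuine alternation is introduced), and it must be inserted \emph{without} creating a new alternation block in the image-by-$Q$ word of any path from $I$. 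If the state $v$ being augmented is existential, prepending existential check-states is free; if $v$ is universal, the check must be done \emph{before} arriving at $v$, i.e. attached to the incoming edges / predecessor states, or the check-states must be given the $\forall$ quantifier (a single-branch $\forall$ behaves like $\exists$). Either way, one verifies that every path from $I$ in $B$ still has its $Q$-image a subword of the corresponding path in $A$ with at most the same number of blocks, hence $B$ is still $\Sigma_n$; and the decomposition structure of Definition~\ref{def:PWA}'s hierarchy is preserved because the added gadget states can be placed in the same block $S$ or $V\setminus S$ as the state they augment, introducing no new path between the blocks.

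The correctness argument then runs: if $x \in X \cap Y$, take the accepting runs of $A$ from each $(p, i_{x_p})$; the same runs lifted to $B$ are accepting because every local check they perform succeeds ($x$ has no pattern of $\mathcal F$); conversely if $B$ accepts $x$, then stripping the gadgets gives accepting runs of $A$ so $x \in X$, and moreover for every $p$ the run from $(p, i_{x_p})$ performs (and passes) the local check at $p$, so no pattern of $\mathcal F$ with lower-left corner $p$ appears — ranging over all $p$, this gives $x \in Y$. Here one uses that $B$ never fails the check on a configuration in $X \cap Y$, so the forced extra branching doesn't spuriously reject.

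The main obstacle I expect is purely the bookkeeping around quantifiers and the decomposition: one must be careful that inserting the check gadget does not, for a $\Pi_n$ or $\Delta_n$ automaton, turn an initial $\forall$-block into $\exists\forall$ or otherwise add an alternation, and that the ``there must be one such edge'' clause of the $\forall$ case in Definition~\ref{def:PWA} is respected by the gadget (the gadget always has its moves available, so this is fine). A clean way to sidestep the predecessor-rewiring annoyance is to observe that a $\forall$-state with a single outgoing edge is semantically an $\exists$-state, so we can always realise the check as a chain of single-successor states carrying \emph{whatever} quantifier matches the block we are currently in, thereby never starting a new block. Once this is set up, correctness is routine.
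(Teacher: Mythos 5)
Your construction is correct and rests on the same key observation as the paper's proof: since acceptance demands a run from every position $p$, a local check of the patterns with supports $\{\bullet,\rightarrow\}$ and $\{\bullet,\uparrow\}$ anchored at the starting position already certifies $x\in Y$ globally. The paper exploits this more economically than you do: it performs the check only once, deterministically, from the initial state (then returns to the start and launches $A$ unchanged), which makes all the quantifier and decomposition bookkeeping you carry out for per-step checks unnecessary.
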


\begin{proof}
Given a PWA $A$, define $A'$ as follows. From the initial state, check the area $\{\bullet, \uparrow, \rightarrow\}$ around the initial position. If a forbidden pattern is found, reject; this is done deterministically. Otherwise, come back to the initial position and execute a run of $A$. $A'$ belongs to the same class as $A$ and accepts all configurations in $L_\infty(A)$ where no forbidden pattern appears.
\end{proof}
  
\subsection{Sunny Side Up}

  \begin{definition}[Sunny side up]
    The sunny side up subshift $X_{ssu}$ is the set of configurations $x\in \{0, 1\}^{\Z^2}$ with at most one $i\in\Z^2$ such that $x_i = 1$.
  \end{definition}
  
  The sunny side up subshift can be easily accepted using a $\forall$-automaton that has the ability of exploring an unbounded space and rejecting if any "problem" is found in any of the branches.

  \begin{proposition} \label{prop-oui-pi}
    $X_{ssu} \in \Pi_1$.
  \end{proposition}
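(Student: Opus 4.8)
The plan is to build a $\forall$-automaton (that is, a $\Pi_1$ PWA) that, started at any position, verifies that the rest of the configuration contains no \emph{second} $1$. The key observation is that a universal automaton can fork into arbitrarily many branches, so from a single starting position we can dispatch one branch to every other position of $\Z^2$ and have each such branch check a local condition. Concretely, the automaton started at $(p, i_{x_p})$ should: if $x_p = 0$, spawn branches that walk outward in all four axis directions; if $x_p = 1$, it must additionally remember "I have already seen a $1$" and every branch it spawns should reject upon encountering another $1$.

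The main steps are as follows. First I would design the state set so that from a $0$-state the run can, using $\forall$-branching, simultaneously (i) loop in place with a $\bullet$-edge (so the branch rooted here never dies) and (ii) send one branch stepping $\rightarrow$, one $\leftarrow$, one $\uparrow$, one $\downarrow$; along each such ray the automaton stays in a "looking for a $1$" state, and if it reads a $1$ it must decide whether this is the \emph{first} or a \emph{second} one relative to the origin of the run. The cleanest way to handle this is to make the walker record, in its finite state, whether the run \emph{started} on a $1$. If it started on a $1$ (the starting position itself held the unique candidate), then any $1$ found elsewhere is a violation, so the walker moves to a dead state (no outgoing edges), causing that branch — and hence the whole run — to reject. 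If it started on a $0$, then the first $1$ encountered along a ray is allowed, but from there the walker switches to "I have now seen a $1$" mode and continues; a subsequent $1$ on that same ray is a violation and kills the branch. Since every pair of distinct $1$'s in a configuration with two or more $1$'s is witnessed by some ray from some position (indeed from the position of one of them), some run from some starting position will have a rejecting branch, so the configuration is not accepted; conversely if there is at most one $1$, every branch survives forever and every run from every position is accepting, so the configuration is accepted.

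There is one subtlety to get right: rays are one-dimensional, so a single walker moving only along $\rightarrow$ from position $p$ only sees the row of $p$. Two $1$'s that are neither in the same row nor the same column would not be caught by a straight ray from either of them. The fix is to use a two-phase exploration typical of $4$-way automata: first walk along one axis, and at \emph{every} visited cell fork off a fresh perpendicular ray (again by $\forall$-branching, keeping a $\bullet$-loop so the spine branch stays alive). This way, starting from any position $p$, the union of footprints of all branches covers all of $\Z^2$, and each branch is a straight segment along which we run the "at most one $1$ on this branch, and a stricter test if we started on a $1$" check. Actually it is simpler to exploit that we get to choose the \emph{starting} position: it suffices that the run started at the position $q$ of some $1$ detects a conflict, and from $q$ we can reach every cell of $\Z^2$ by the spine-plus-perpendicular-rays exploration, so every other $1$ lies on some branch and triggers rejection. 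I would then invoke Lemma~\ref{lem:inter-SFT} is not needed here, but I should note that the automaton only uses directions in $\{\uparrow,\downarrow,\leftarrow,\rightarrow,\bullet\}$ and universal quantifiers, hence is a legitimate $\Pi_1$ PWA.

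The step I expect to be the main obstacle is making the acceptance condition of Definition~\ref{def:PWA} work in our favour rather than against us: recall that for a $\forall$-state the definition requires that \emph{all} edges $(v,v',d)$ with $D(v') = x_{p+d}$ have an accepting run \emph{and} that at least one such edge exists. So I must be careful that along each exploration ray there is always a legal next move in \emph{both} symbols $0$ and $1$ (so the "at least one edge" clause never accidentally kills a branch that should survive), while simultaneously ensuring that the "seen a $1$ already, now reading a $1$" transition genuinely has \emph{no} legal successor. Designing the transition table so that every state has successors for every symbol it might legitimately read next, except precisely the conflict states which are dead ends, is the delicate bookkeeping part; the rest is routine.
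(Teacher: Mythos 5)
Your construction is correct and essentially the paper's own: the paper's $\forall$-automaton (Figure~\ref{fig-lssu-automata}) also lets runs started on a $0$ accept trivially, and from a $1$ it universally explores the plane by L-shaped branches (a straight segment plus one perpendicular turn, i.e.\ exactly your spine-plus-perpendicular-rays pattern), with a dead state on reading a second $1$; correctness rests on the same two observations you make, namely that the branches cover all of $\Z^2$ without ever revisiting a position, and that any second $1$ lies on some branch of the run started at the first $1$. Your care about the ``at least one edge'' clause in the $\forall$-semantics is exactly the bookkeeping the paper's transition table handles, so no gap remains.
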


  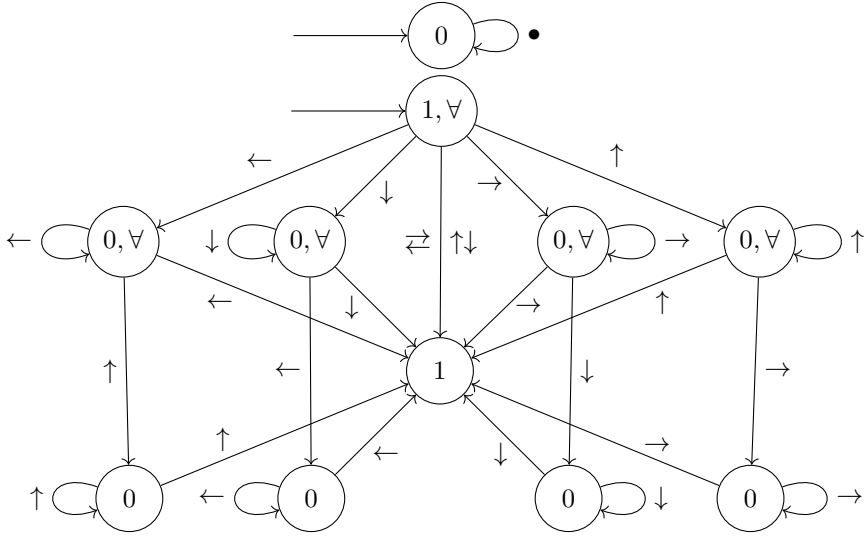
\begin{figure}[t]
    \centering
    \begin{tikzpicture}  [->, node distance = 1.5cm]
	\node[state] (a) {$0$};
	\node[state] (b) at (0, -1) {$1, \forall$};
	\node[state] (c) [below left =of b] {$0, \forall$};
	\node[state] (d) [left =of c] {$0, \forall$};
	\node[state] (e) [below right =of b] {$0, \forall$};
	\node[state] (f) [right =of e] {$0, \forall$};
    \node[state] (k) [below right =of c] {$1$};
    \node[state] (h) [below left =of k] {$0$};
	\node[state] (i) [below right =of k] {$0$};
	\node[state] (g) [left =of h] {$0$};
	\node[state] (j) [right =of i] {$0$};
    \node (input0) [left =of a] {};
    \node (input1) [left =of b] {};
	\path
      (input0) edge (a)
      (input1) edge (b)
      (b) edge node [above left] {$\leftarrow$} (d)
      (b) edge node [below right = -.1cm] {$\downarrow$} (c)
      (b) edge node [below left = -.1cm] {$\rightarrow$} (e)
      (b) edge node [above right] {$\uparrow$} (f)
      (d) edge node [left] {$\uparrow$} (g)
      (c) edge node [left] {$\leftarrow$} (h)
      (e) edge node [right] {$\downarrow$} (i)
      (f) edge node [right] {$\rightarrow$} (j)
      (b) edge node [right] {$\uparrow\downarrow$} node [left] {$\begin{matrix}\rightarrow\\[-8pt] \leftarrow\end{matrix}$} (k)
      (d) edge node [below, near start, yshift=-0.1cm] {$\leftarrow$} (k)
      (c) edge node [left, xshift=-0.1cm] {$\downarrow$} (k)
      (e) edge node [right] {$\rightarrow$} (k)
      (f) edge node [below, near start] {$\uparrow$} (k)
      (g) edge node [near start, above] {$\uparrow$} (k)
      (h) edge node [near start, right, xshift=0.1cm] {$\leftarrow$} (k)
      (i) edge node [near start, left, xshift=-0.1cm] {$\downarrow$} (k)
      (j) edge node [near start, above] {$\rightarrow$} (k)
      ;

	 \draw (a) to [out=20, in=-20, looseness=7] node [right] {$\bullet$} (a);
	 \draw (d) to [out=160, in=200, looseness=7] node [left] {$\leftarrow$} (d);
	 \draw (c) to [out=160, in=200, looseness=7] node [left] {$\downarrow$} (c);
	 \draw (e) to [out=20, in=-20, looseness=7] node [right] {$\rightarrow$} (e);
	 \draw (f) to [out=20, in=-20, looseness=7] node [right] {$\uparrow$} (f);
	 \draw (g) to [out=160, in=200, looseness=7] node [left] {$\uparrow$} (g);
	 \draw (h) to [out=160, in=200, looseness=7] node [left] {$\leftarrow$} (h);
	 \draw (i) to [out=20, in=-20, looseness=7] node [right] {$\downarrow$} (i);
	 \draw (j) to [out=20, in=-20, looseness=7] node [right] {$\rightarrow$} (j);

\end{tikzpicture}
    \caption{\label{fig-lssu-automata} A $\forall$-automaton accepting $X_{ssu}$. A branch visiting the central state has no next move, so it rejects.}
  \end{figure}  

\begin{proof}
Let $A$ be the automaton represented in Figure~\ref{fig-lssu-automata}. Every run starting from a position containing $0$ accepts. Therefore every configuration with no $1$ is accepted.
    
If $A$ starts on a $1$ at position $i$, it nondeterministically picks a quarter-plane to explore and rejects if this branch encounters another $1$. A run never visits a given position twice, so if $x$ contains a single $1$, all runs accept.

If $x$ contains two $1$ at positions $i$ and $j$, draw a path from $i$ to $j$ using at most two directions in $\{\rightarrow, \uparrow,\leftarrow,\downarrow\}$. This path yields a rejecting run starting from $i$. 
\end{proof}

\begin{proposition} \label{prop-non-sigma}
    $X_{ssu}\notin \Sigma_1$.
\end{proposition}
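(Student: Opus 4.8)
The plan is to show that no $\Sigma_1$ (existential) PWA can accept $X_{ssu}$ by exploiting the fundamental weakness of existential nondeterminism: an accepting run is a single infinite branch (a path) through the configuration, and such a path, being one-dimensional, can only "see" a bounded-width neighborhood of a line it traces. The strategy is a pumping argument. Suppose $A$ is a $\Sigma_1$ PWA with $L_\infty(A) = X_{ssu}$. Consider the all-zero configuration $0^{\Z^2}$, which is accepted, so from every position there is an infinite accepting path. In particular, from position $0$ there is an accepting run $r$; since $A$ has finitely many states, the footprint of this path either stays in a bounded region or escapes to infinity, and in either case, by the pigeonhole principle, we can find a state $v$ that repeats at two positions $p$ and $q$ on the path with the segment between them being "pumpable": the same sequence of moves can be iterated.

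\medskip

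The key steps, in order, are as follows. First I would set up the configuration with a single $1$ at the origin, call it $x^{(1)} \in X_{ssu}$, which must be accepted; fix an accepting run from each position, and focus on one starting far from the origin, say at position $p_0$ with $|p_0|$ large, so that the run starts on a $0$. Second, I would argue that because the run from $p_0$ is an infinite path visiting infinitely many positions but $A$ has finitely many states, there exist two positions $p$ and $q$ on this path with the same state $v$ such that the portion of the path between them forms a loop that can be repeated — more precisely, iterating this loop $k$ times and then continuing as before still gives a valid infinite run (validity only depends on the local symbol at each visited position and the edges of $A$). Third, and this is the crux, I would choose $p_0$ and the loop so that pumping it moves the head across the origin or creates a situation incompatible with $X_{ssu}$: the cleanest version is to take the accepting run from the origin itself on $x^{(1)}$, find a repeated state along it at positions $p, q$, and then build a new configuration $x'$ that agrees with $x^{(1)}$ on the footprint seen by the pumped run but has a second $1$ somewhere, so that $x' \notin X_{ssu}$ yet $x'$ is still accepted by $A$ — contradiction. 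To make the second $1$ "invisible" one places it off the footprint of the single accepting run we are manipulating.

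\medskip

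Actually, the more robust formulation, which I would follow, is: a $\Sigma_1$ accepting run on a configuration $x$ visits a set of positions $F$ (its footprint), and the run depends only on $x|_F$; moreover $F$ is "thin" in the sense that, by a pumping/pigeonhole argument on states, the run either is eventually periodic (footprint contained in a periodically-repeated bounded strip, escaping along a rational direction) or otherwise its footprint still misses most of $\Z^2$. Either way, given the accepting run from a fixed position $p_0$ on $0^{\Z^2}$, its footprint $F$ omits some position $j$; we then place a single $1$ at $j$ to get $x^{(1)} \in X_{ssu}$, note the run from $p_0$ is unchanged and still accepting, but now we also need an accepting run from $j$ itself. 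Repeating the argument, the accepting run from $j$ on $x^{(1)}$ has a footprint $F'$ omitting some $j' \neq j$; placing a second $1$ at $j'$ we get a configuration with two $1$s that is not in $X_{ssu}$, yet both the run from $p_0$ (still valid, $F$ avoids $j,j'$ after re-choosing if necessary) and crucially we must recheck all starting positions. The honest obstacle here is bookkeeping: one must ensure that after adding the extra $1$s, every position still has an accepting run, not just the ones we tracked. I expect the main obstacle to be exactly this — controlling footprints of runs from all starting positions simultaneously — and I would handle it by a compactness/diagonal argument or by the pumping lemma for two-dimensional patterns alluded to in the introduction, arranging the two $1$s to lie outside a large enough ball around the origin with the pumped loop escaping in a fixed direction, so that shift-invariance reduces "all positions" to finitely many cases.
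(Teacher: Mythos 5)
Your high-level strategy (pump an accepting branch and build an accepted configuration with two $1$s) is indeed the paper's strategy, but the proposal stops exactly where the real work begins, and the fixes you sketch for that point do not work. Acceptance of $z^p$ (two $1$s, at $0$ and $p$) requires an accepting run from \emph{every} position of $\Z^2$, and you explicitly defer this to ``a compactness/diagonal argument'' or to ``shift-invariance reducing all positions to finitely many cases.'' Neither helps: $z^p$ is not shift-invariant (it has two distinguished cells, and starting positions near $0$, near $p$, between them, and far away are genuinely different situations), and compactness only passes from finite patterns to configurations; it gives no control over the footprints of runs from infinitely many starting positions. Also, your fallback claim that a non-pumped branch's footprint ``misses most of $\Z^2$'' is false in general (a spiralling branch can visit every cell); thinness is only obtained \emph{after} pumping, and only for the pumped branch you construct.

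What is missing is the geometric bookkeeping that the paper's proof is mostly made of: classify the pumping vectors available on the all-zero background (cycles among states reachable by reading only $0$s, the set $P_0$) and the pumping vectors of branches started at or near the $1$ (the set $P_1$), then choose $p$ outside the finitely many bands these vectors generate, with a case split according to whether all vectors of $P_0$ are colinear or two noncolinear vectors $v_a,v_b$ exist. This is what lets one produce, for \emph{each} starting position $k$, a pumped accepting branch (taken from the run on the all-zero configuration $x$, on $y$, or on $\sigma^p(y)$ depending on where $k$ sits) whose band avoids both $0$ and $p$, including the delicate starting positions from which every escape direction in $P_0$ points at one of the two $1$s, and the branches that actually reach a $1$ and must be continued by pumping along a vector of $P_1$. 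Without an argument of this kind your construction only protects the finitely many runs you track, so the proof as proposed has a genuine gap at its central step.
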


The following proof is related to Proposition~1 in \cite{planewalking}, which only holds for deterministic PWA (and a larger class of subshifts). Intuitively, $\Sigma_1$-automata are ill-fitted to explore unbounded spaces, as a run may reject from a given starting point only if all branches reject, but every branch may only visit a small part of the space. 

In the next proof, we use the notation $p\pm K = \{p\pm i : i\in K\}$ for $p\in\Z^2, K\subset \Z^2$.
\begin{proof}
Let $A$ be a $\Sigma_1$ PWA with $C$ states; we show that $X_{ssu} \neq L_\infty(A)$. For $\vec{v} \in \Z^2$ and $r\in \N$, we denote by $B^+(\vec{v},r)$ the set $\{j\in\Z^2 : \exists k\in \mathbb R^+, d(j,k\vec{v})\leq r\}$ (for the Manhattan distance on $\mathbb R^2$), and $B(\vec{v}, r) = B^+(\vec{v},r) \cup -B^+(\vec{v},r)$. In other words, this is the band of width $r$ around the (half-)line of direction $\vec{v}$ starting from $0$; if $\vec{v} = 0$, $B(0,r) = B^+(0,r)$ is a ball of radius $r$. When $r$ is not indicated, we assume that $r = C$.

Let $A$ be a $\Sigma_1$-automaton that accepts all configurations in $X_{ssu}$. Define $x \in X_{ssu}$ such that $x_i = 0$ for all $i$; $y \in X_{ssu}$ such that $y_0 = 1$ and $y_i = 0$ for all other positions $i$; and, for any $p\neq 0$, $z^p \notin X_{ssu}$ such that $z^p_0 = z^p_p = 1$ and $z^p_i = 0$ for all other positions $i$. We find some $p$ such that $A$ accepts $z^p$, showing that $A$ does not accept $X_{ssu}$.

To find some accepted $z^p$, we use the following property: if there is an accepting branch in $x$ that visits neither $0$ nor $p$, the same branch is accepting in $z^p$. Similarly, if an accepting branch in $y$ does not visit $p$ or an accepting branch in $\sigma^p(y)$ does not visit $0$, then the same branch is accepting in $z^p$.
\medskip

Let $S_1(r) \subset V\times B(0,r)$ be such that $(s,k)\in S_1$ if and only if there is an accepting run on $y$ from $(s,k)$. For each $(s,k)\in S_1$, pick an arbitrary accepting branch $b$. It is described by a sequence $(s_n, p_n)_{n\in\N} \in (V\times \Z^2)^\N$. If $b$ stays in $B(0)$, we find two steps $i<j$ such that $(s_i,p_i) = (s_j,p_j)$ and, by a pumping argument, we build another accepting branch $b_p$ which is eventually periodic and stays in $B(0)$. If $b$ leaves $B(0)$, we find two steps $i<j$ such that $s_i = s_j$ and $d(0,p_j) \geq d(0,p_i)$ and, by a pumping argument, we build another accepting branch $b_p$ which is eventually periodic up to translation by the pumping vector $v^{(s,k)} = p_j - p_i$ every $j-i$ steps; in other words, $b_p$ stays in some band $B(v^{(s,k)})$.
Denote $P_1(r) = \{v^{(s,k)} : (s,k)\in S_1(r)\}$ the set of all such pumping vectors.

Let $S_0\subset V$ be the set of states reachable from $i_0$ through a path labelled by $0$. Since $x$ is accepted, there must be a cycle that stays in $S_0$. For any such simple cycle (without repeated states), the associated pumping vector is the sum of all directions. Denote $P_0$ the finite set of all such pumping vectors.

We distinguish two cases that we illustrate in Figure~\ref{fig-ssupump}.

\paragraph*{All vectors in $P_0$ are colinear to some $v$.} We choose $p$ such that $p \notin B(v) \cup \bigcup_{v_1 \in P_1(0)} B(v_1)$. This avoids a finite set of one-dimensional subsets, so such a choice is possible.

Take any position $k\in\Z^2$ and assume that $k\notin p-B^+(v)$. Pick an accepting branch $b$ in $y$ from $k$. As long as $b$ never visits the $1$ in position $0$, it stays in a state of $S_0$, so we can pump to build a periodic accepting branch that stays in $k+B^+(v)$ and does not visit $p$. If $b$ visits $0$, then it is in some state $s$ such that $(s,0)\in S_1(0)$, and so we can pump to build another accepting branch $b_p$ that stays in $B(0)$ or in $B^+(v_1)$ for some $v_1\in P_1(0)$. Either way, we built an accepting branch in $y$ from $k$ that does not visit $p$, so it is accepting in $z^p$.

If $k\in p-B^+(v)$, then $k\notin 0-B^+(v)$ since we chose $p$ in such a way that the sets are disjoint. With a similar argument, we build an accepting branch that does not visit $0$ on $\sigma^p(y)$.

Every starting position in $z^p$ has an accepting branch, so $z^p$ is accepted by $A$.

\paragraph*{There are two noncolinear vectors $v_a, v_b$ in $P_0$.} This means that, in $x$, the accepting run from $(i_0, 0)$ has two accepting branches that stay in $B^+(v_a)$ and $B^+(v_b)$, respectively. Since $v_a$ and $v_b$ are not colinear, there is $r>0$ large enough that $B(v_a)\cap B(v_b) \subset B(0,r)$. 

We choose $p$ such that $p \notin \bigcup_{v_1 \in P_1} B(v_1, r+C)$ and such that $p$ is in the quarterplane generated by $v_a, v_b$, at distance at least $r+2C$ from the border.

Take a position $k\in\Z^2$. If $k \in B(0,r)$, then we pump to build an accepting branch $b_p$ in $y$ that stays either in $B(0)$ or in $k+B^+(v_1)$ for some $v_1\in P_1(r)$, so that $b_p$ does not visit $p$. If $k \in B(p,r)$, the same argument applies on $\sigma^p(y)$. 

If $k \notin B(0,r)\cup B(p,r)$, then one of the two bands $B_a = k+B^+(v_a)$ and $B_b = k+B^+(v_b)$ contains neither $0$ nor $p$. Indeed, 
\begin{itemize}
\item $B_a\cap B_b$ contains neither position by definition of $r$;
\item If both positions appeared, we would have $|p - 0 - \pm(av_a - bv_b)|\leq 2C$ for some $a,b\geq 0$, which contradicts the choice of $p$.
\end{itemize}
Assuming that it is $B_a$, we pump to build an accepting branch in $x$ from $k$ that stays in $B_a$ and visits neither $0$ nor $p$.

Every position in $z^p$ has an accepting branch, so $z^p$ is accepted by $A$.
\end{proof}

\begin{figure}[t]
    \centering
    \begin{minipage}{0.49\textwidth}
    \pgfmathsetmacro \gs{0.5}

\begin{tikzpicture}


    \draw[dashed] (-1 * \gs, 6 * \gs)
      -- (3 * \gs, 2 * \gs);
    \draw[dashed] (-1 * \gs, 8 * \gs)
      -- (4 * \gs, 3 * \gs);

    \draw[dashed] (2 * \gs, 11 * \gs)
      -- (6 * \gs, 7 * \gs);
    \draw[dashed] (4 * \gs, 11 * \gs)
      -- (7 * \gs, 8 * \gs);


  \draw[dashed] (3 * \gs, 3.67 * \gs)
    -- (11.0 * \gs, 3.67 * \gs);
  \draw[dashed] (3 * \gs, 2.33 * \gs)
    -- (11.0 * \gs, 2.33 * \gs);

  \draw[dashed] (7 * \gs, 7.67 * \gs)
    -- (11.0 * \gs, 7.67 * \gs);
  \draw[dashed] (7 * \gs, 6.33 * \gs)
    -- (11.0 * \gs, 6.33 * \gs);


  \draw (3.67 * \gs, 3 * \gs)
    -- (3.67 * \gs, -1 * \gs);
  \draw (2.33 * \gs, 3 * \gs)
    -- (2.33 * \gs, -1 * \gs);

  \draw (7.67 * \gs, 7 * \gs)
    -- (7.67 * \gs, -1 * \gs);
  \draw (6.33 * \gs, 7 * \gs)
    -- (6.33 * \gs, -1 * \gs);


  \foreach \x\y\dx\dy\la in {
    5/5/0/1/$v$,
    9.5/0/0/1/$v$,
    0/0/0/1/$v$
  } {
    \node[circle, fill=black, minimum size = 2, inner sep=0]
      at (\x * \gs, \y * \gs) {};
    \draw[->] (\x * \gs, \y * \gs) --
      node[right] {\la}
      (\x * \gs + \dx * \gs, \y * \gs + \dy * \gs);
  }


  \foreach \x\y\pos in {3/3/0, 7/7/p} {
    \node[circle, draw,fill=white, minimum size=60*\gs] at (\x * \gs, \y * \gs) {};
    \node at (\x * \gs, \y * \gs) {$1$};
  }


  \foreach \x\y\dx\dy\la in {
    2.5/3.3/-1/1/$v_1$,
    0.5/5.4/-1/1/$v_1$,
    6.5/7.4/-1/1/$v_1$, 
    4.5/9.4/-1/1/$v_1$
  } {
    \node[circle, fill=black, minimum size = 2, inner sep=0]
      at (\x * \gs, \y * \gs) {};
    \draw[->] (\x * \gs, \y * \gs) --
      node[above, xshift=0.075cm] {\la}
      (\x * \gs + \dx * \gs, \y * \gs + \dy * \gs);
  }

  \foreach \x\y\dx\dy\la in {
    3.8/2.7/1.4/0/$v'_1$,
    8.8/2.7/1.4/0/$v'_1$,
    7.8/6.7/1.4/0/$v'_1$
  } {
    \node[circle, fill=black, minimum size = 2, inner sep=0]
      at (\x * \gs, \y * \gs) {};
    \draw[->] (\x * \gs, \y * \gs) --
      node[above, yshift=-0.035cm] {\la}
      (\x * \gs + \dx * \gs, \y * \gs + \dy * \gs);
  }

  \draw[->] (3.2 * \gs, -1 * \gs)
    -- (3.2 * \gs, 1 * \gs)
    -- (2.6 * \gs, 1.8 * \gs)
    -- (2.6 * \gs, 2.5 * \gs);

  \draw[->] (2.8 * \gs, -1 * \gs)
    -- (2.8 * \gs, 0 * \gs)
    -- (3.2 * \gs, 1.8 * \gs)
    -- (3.2 * \gs, 2.5 * \gs);

  \draw[->] (6.8 * \gs, -1 * \gs)
    -- (6.8 * \gs, 2 * \gs)
    -- (7.2 * \gs, 3.4 * \gs)
    -- (7.2 * \gs, 5 * \gs)
    -- (6.6 * \gs, 5.8 * \gs)
    -- (6.6 * \gs, 6.5 * \gs);

  \draw[->] (7.2 * \gs, -1 * \gs)
    -- (7.2 * \gs, 2 * \gs)
    -- (6.8 * \gs, 3 * \gs)
    -- (6.8 * \gs, 4 * \gs)
    -- (7.2 * \gs, 5.8 * \gs)
    -- (7.2 * \gs, 6.5 * \gs);

\end{tikzpicture}
    \end{minipage}
    \begin{minipage}{0.49\textwidth}
    \pgfmathsetmacro \gs{0.5}

\begin{tikzpicture}


    \draw[dashed] (-1 * \gs, 6 * \gs)
      -- (3 * \gs, 2 * \gs);
    \draw[dashed] (-1 * \gs, 8 * \gs)
      -- (4 * \gs, 3 * \gs);

    \draw[dashed] (2 * \gs, 11 * \gs)
      -- (6 * \gs, 7 * \gs);
    \draw[dashed] (4 * \gs, 11 * \gs)
      -- (7 * \gs, 8 * \gs);


  \draw[dashed] (3 * \gs, 3.67 * \gs)
    -- (11.0 * \gs, 3.67 * \gs);
  \draw[dashed] (3 * \gs, 2.33 * \gs)
    -- (11.0 * \gs, 2.33 * \gs);

  \draw[dashed] (7 * \gs, 7.67 * \gs)
    -- (11.0 * \gs, 7.67 * \gs);
  \draw[dashed] (7 * \gs, 6.33 * \gs)
    -- (11.0 * \gs, 6.33 * \gs);


  \foreach \x\y\dx\dy\la in {
    5/5/0/2/$v_a$,
    9.5/0/0/2/$v_a$,
    0/0/0/2/$v_a$
  } {
    \node[circle, fill=black, minimum size = 2, inner sep=0]
      at (\x * \gs, \y * \gs) {};
    \draw[->] (\x * \gs, \y * \gs) --
      node[right] {\la}
      (\x * \gs + \dx * \gs, \y * \gs + \dy * \gs);
  }


  \foreach \x\y\pos in {3/3/0, 7/7/p} {
    \node[circle, draw,fill=white, minimum size=60*\gs] at (\x * \gs, \y * \gs) {};
    \node at (\x * \gs, \y * \gs) {$1$};
  }


  \foreach \x\y\dx\dy\la in {
    2.5/3.3/-1/1/$v_1$,
    0.5/5.4/-1/1/$v_1$,
    6.5/7.4/-1/1/$v_1$, 
    4.5/9.4/-1/1/$v_1$
  } {
    \node[circle, fill=black, minimum size = 2, inner sep=0]
      at (\x * \gs, \y * \gs) {};
    \draw[->] (\x * \gs, \y * \gs) --
      node[above, xshift=0.075cm] {\la}
      (\x * \gs + \dx * \gs, \y * \gs + \dy * \gs);
  }

  \foreach \x\y\dx\dy\la in {
    3.8/2.7/1.4/0/$v'_1$,
    8.8/2.7/1.4/0/$v'_1$,
    7.8/6.7/1.4/0/$v'_1$
  } {
    \node[circle, fill=black, minimum size = 2, inner sep=0]
      at (\x * \gs, \y * \gs) {};
    \draw[->] (\x * \gs, \y * \gs) --
      node[above, yshift=-0.035cm] {\la}
      (\x * \gs + \dx * \gs, \y * \gs + \dy * \gs);
  }

  \foreach \x\y\dx\dy\la in {
    0/0/2/0/$v_b$,
    9.5/0/2/0/$v_b$,
    5/5/2/0/$v_b$ 
  } {
    \node[circle, fill=black, minimum size = 2, inner sep=0]
      at (\x * \gs, \y * \gs) {};
    \draw[->] (\x * \gs, \y * \gs) --
      node[below] {\la}
      (\x * \gs + \dx * \gs, \y * \gs + \dy * \gs);
  }

\end{tikzpicture}
    \end{minipage}
    \caption{\label{fig-ssupump}
        Left: the first case when all vectors in $P_0$ are colinear to $v$. Right: the second case when $\{v_a, v_b\}\subset P_0$. In both cases, $P_1 = \{v_1, v'_1\}$ and the circles represent positions $0$ and $p$. The proof shows that each initial position admits a path that never visits $0$ or $p$ and accepts in $y$ or $\sigma^p(y)$. 
      }
  \end{figure}
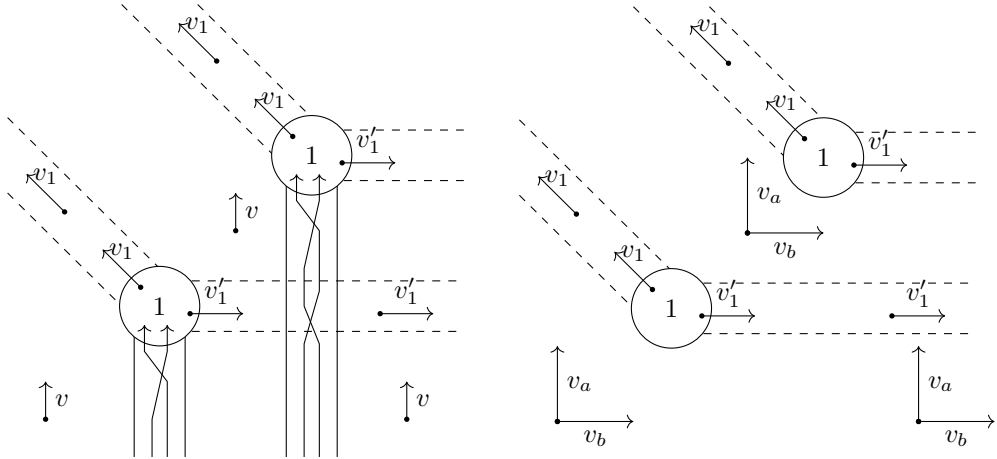

\subsection{The Cone Labyrinth}

\begin{definition} \label{def-cl}
    Let $\Sigma = \{0, 1, \#\}$.
    The cone labyrinth subshift (denoted $X_{cl}$) is the set of configurations $x$ which contains none of the forbidden patterns $010, 11, \#1\#, \vstack{0}{\#}, \vstack{\#}{0}, \vstack{1}{\#}, \vstack{\#}{1}$ and such that from any position with a pattern $\#1$, there is a path to a position with a pattern $1\#$ using steps $\nearrow, \rightarrow, \searrow$ that only visits positions with symbols $0$.
\end{definition}

In a configuration $x\in\{0,1,\#\}^{\Z^2}$ of $X_{cl}$, every column contains either only $\#$ symbols, or only $0$ and $1$ symbols. The $\#$ marks the outside areas, the $0$ the inside areas, and $1$ corresponds to entrances / exits to change areas. In particular, a $1$ can only be between a $0$ and a $\#$. Furthermore, every entrance $\#1$ can be matched to at least one exit $1\#$, in the sense that one can walk from $\#1$ to $1\#$ through zeroes using steps $\nearrow, \rightarrow, \searrow$.

In other words, if the width of the inside area is $k$, then every entrance must be matched to an exit at a vertical distance at most $k$.

\begin{proposition} \label{prop-oui-sigma}
  $X_{cl}\in \Sigma_1$.
\end{proposition}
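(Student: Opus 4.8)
The plan is to build a purely existential ($\Sigma_1$) plane-walking automaton $A$ with $L_\infty(A)\cap Y = X_{cl}$, where $Y$ is the SFT defined by the finite list of forbidden patterns in Definition~\ref{def-cl}; Lemma~\ref{lem:inter-SFT} then gives $X_{cl}\in\Sigma_1$, so $A$ only has to certify, on configurations already in $Y$, the semantic ``path'' condition. The first thing I would establish is the geometry of a configuration $x\in Y$: since $\#$ cannot be vertically adjacent to $0$ or $1$, every column is either all $\#$ (``outside'') or entirely over $\{0,1\}$ (``inside''); and the patterns $010$, $11$, $\#1\#$ force each $1$ to have a $\#$ on exactly one of its two horizontal sides. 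Hence a $1$ can occur only in the leftmost column of a maximal band of consecutive inside columns (an \emph{entrance}, matching $\#1$) or in the rightmost column of such a band (an \emph{exit}, matching $1\#$) --- never in an interior column, nor in a width-one band. The consequence I would exploit is that a path using steps in $\{\nearrow,\rightarrow,\searrow\}$ (each increasing the first coordinate) that starts at an entrance and stays on $0$'s is trapped inside that band, and can first meet a $1$ only in its rightmost column, i.e. at an exit.

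Next I describe $A$ (all states carrying $\exists$). From the initial state over $0$ or over $\#$, loop on $\bullet$ forever, so every such position is accepted. From the initial state over a $1$ at position $p$, first move $\leftarrow$ and read $x_{p-\rightarrow}$: only the outgoing edge whose target carries that letter is usable, so this step is effectively deterministic. If $x_{p-\rightarrow}=0$ (in $Y$: an exit), loop and accept. If $x_{p-\rightarrow}=\#$ (an entrance), move back to $p$ and enter a \emph{walking} phase: nondeterministically pick a step in $\{\nearrow,\rightarrow,\searrow\}$; if the new cell holds a $0$, keep walking; if it holds a $1$, take one further step $\rightarrow$ and accept (loop) iff that last cell is $\#$; if it holds a $\#$, there is no outgoing edge and the branch dies. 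The diagonal moves are a shorthand: a step $\nearrow$ is realised as $\rightarrow$ then $\uparrow$ through an auxiliary state given in one copy per alphabet letter, which reads the cell it passes through without constraining it (symmetrically for $\searrow$); this is exactly what makes the set of walks of $A$ issued from an entrance coincide with the set of admissible paths of Definition~\ref{def-cl}.

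Then I would check $L_\infty(A)\cap Y = X_{cl}$. If $x\in X_{cl}$: positions over $0$ or $\#$ are accepted by their loop; an exit is accepted after the left-reading step; from an entrance, the path supplied by Definition~\ref{def-cl} is replayed step by step by the walking phase, terminates at an exit whose right neighbour is $\#$, and the run accepts; hence $x\in L_\infty(A)$. Conversely, let $x\in Y\cap L_\infty(A)$ and pick any entrance $p$. Its accepting run must, after the deterministic left-reading, walk a monotone path on $0$'s; every branch either dies or reaches a $1$, and since the run accepts, at least one branch reaches a $1$ and then confirms a $\#$ to its right. By the structural observation this $1$ is an exit, so Definition~\ref{def-cl}'s condition holds at $p$; as $p$ was an arbitrary entrance, $x\in X_{cl}$.

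The part I expect to require the most care is calibrating $A$ so that it certifies \emph{exactly} the admissible paths: the automaton must never be able to accept outright from a $1$ before it has decided whether that $1$ is an entrance or an exit (otherwise the path condition is never enforced); during a walk it must have no edge toward a $\#$-labelled state (so that it cannot slip out of a band); and the encoding of the diagonal steps must impose no hidden condition on the cells crossed transversally. Once the geometry of configurations of $Y$ is pinned down, the rest is routine bookkeeping.
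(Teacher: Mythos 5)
Your construction is essentially the paper's proof: after invoking Lemma~\ref{lem:inter-SFT} to dispose of the local (SFT) constraints, both build a purely existential automaton that loops (hence accepts) from every position except an entrance, and from an entrance guesses a path with steps $\nearrow,\rightarrow,\searrow$ through $0$'s until it meets a $1$, which the band structure of configurations satisfying the local constraints forces to be an exit. Your deviations --- detecting the entrance by peeking left from the $1$ rather than right from the $\#$, explicitly confirming the $\#$ to the right of the $1$ that is found, and your particular two-move encoding of the diagonal steps --- are cosmetic, and your correctness argument is at the same level of detail as the paper's, including the tacit treatment of branches that walk forever through an unbounded corridor of $0$'s.
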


\begin{proof}
We build a $\exists$ automaton that accepts $X_{cl}$ assuming that patterns from Definition~\ref{def-cl} do not appear; we can do this assumption by Lemma~\ref{lem:inter-SFT}. The automaton is represented in Figure~\ref{fig-cl-automata}.

  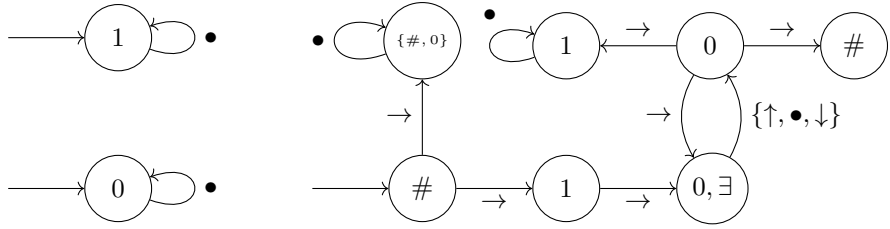
\begin{figure}[t]
    \centering
    \begin{tikzpicture}  [->]

    \node[state] (0) at (-4,0) {$0$};
    \node[state] (1) at (-4,2) {$1$};
    \node (in0) [left =of 0] {};
    \node (in1) [left =of 1] {};
    \path (in0) edge (0) (in1) edge (1);
	\draw (0) to [out=-20, in=20, looseness=7] node [right] {$\bullet$} (0);
    \draw (1) to [out=-20, in=20, looseness=7] node [right] {$\bullet$} (1);

	\node[state] (a) {$\#$};
    
    \node (input0) [left =of a] {};
    \node[state] (d) [above =of a] {\tiny $\{\#, 0\}$};
	\node[state] (b) [right =of a] {$1$};
    \node[state] (e) [above =of b] {$1$};

	\node[state] (c) [right =of b] {$0, \exists$};
	\node[state] (f) [right =of e] {$0$};
    \node[state] (g) [right =of f] {$\#$};

	\path
	  (input0) edge (a)
      (a) edge node [below] {$\rightarrow$} (b)
      (a) edge node [left] {$\rightarrow$} (d)
      (b) edge node [below] {$\rightarrow$} (c)
      (c) edge [bend right] node [right] {$\uparrow \bullet \downarrow$} (f)
      (f) edge [bend right] node [left] {$\rightarrow$} (c)
      (f) edge node [above] {$\rightarrow$} (e)
      (f) edge node [above] {$\rightarrow$} (g)
      (c) edge [bend left=10] node [left] {$\uparrow \downarrow$} (e)
      ; 	
   \draw (d) to [out=200, in=160, looseness=7] node [left] {$\bullet$} (d);
	 \draw (e) to [out=200, in=160, looseness=7] node [above = .20cm] {$\bullet$} (e);

\end{tikzpicture}
    \caption{\label{fig-cl-automata} $\exists$-automaton accepting $X_{cl}$. We assumed for readability that the patterns from Definition~\ref{def-cl} have already been forbidden.}
  \end{figure}

If the initial position is not the entrance of a labyrinth $\#1$, accept (by looping). Otherwise, walk nondeterministically in all directions $\nearrow, \rightarrow, \searrow$. Keep going as long as you see $0$, accept if you find a $1$, reject if you find a $\#$. Therefore the automata accepts if and only if, starting from every entrance, one branch found a matching exit.
\end{proof}

\begin{proposition} \label{prop-non-pi}
  $X_{cl} \notin \Pi_1$.
\end{proposition}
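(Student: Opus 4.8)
The plan is to assume $X_{cl} = L_\infty(A)$ for a $\Pi_1$ plane-walking automaton $A$ with $C$ states and derive a contradiction by a pumping argument on \emph{rejecting walks}. Since all states of $A$ are universal, the run-tree on any configuration $x$ from a cell $p$ is simply the tree of all walks of $A$ compatible with the symbols of $x$, and $x$ is rejected exactly when some branch from some cell reaches a (position, state) pair with no legal outgoing move; such a finite walk is a rejecting walk, and after excising stationary sub-loops (repetitions of a (position, state) pair, which can always be removed inside a region where the local view is constant) we may assume a rejecting walk visits each (position, state) pair at most once. By Lemma~\ref{lem:inter-SFT} we may also assume the forbidden patterns of Definition~\ref{def-cl} are already checked, so the only way to be outside $X_{cl}$ is to have an entrance $\#1$ whose reachability cone (using $\nearrow,\rightarrow,\searrow$) contains no exit $1\#$.

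I would then build, for every large $k$, two configurations on an inside area of width about $k$ flanked by $\#$-columns, with all interior cells $0$, one entrance cell $\#1$ fixed near the bottom-left, and exit cell(s) $1\#$ placed either exactly on the upper boundary of the entrance's cone — so the configuration is in $X_{cl}$ — or one step outside that cone — so it is not in $X_{cl}$. Call these $G_k$ and $B_k$; by assumption $A$ accepts $G_k$ and rejects $B_k$, giving a rejecting walk $\gamma$ of $A$ on $B_k$. A locality/compactness argument shows $\gamma$ must enter the radius-$C$ neighbourhood of the entrance (otherwise deleting the entrance yields a configuration of $X_{cl}$ — vacuously, with no $\#1$ pattern — on which $\gamma$ is still rejecting) and must enter the radius-$C$ neighbourhood of the displaced exit (otherwise $B_k$ and $G_k$ agree on everything $\gamma$ sees, so $\gamma$ rejects $G_k$, contradiction). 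Hence $\gamma$ crosses the $\Theta(k)$-wide homogeneous interior of the strip, where $A$'s radius-$C$ view is the constant pattern ``all $0$''; since $k\gg C$, $\gamma$ repeats a state there, producing a non-stationary loop $\delta$ lying entirely inside this region with bounded displacement $w$. Pumping $\delta$ (inserting copies) or pruning it (deleting copies that are present) transforms $\gamma$ into a rejecting walk on a modified strip in which the exit has been translated by a multiple of $w$ and the width adjusted accordingly; for a good choice of sign and multiplicity this lands the exit back inside the cone, producing a configuration of $X_{cl}$ that $A$ nevertheless rejects — the contradiction.

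The delicate point, which I expect to be the main obstacle, is controlling the displacement $w$ of the pumpable loop: translating the exit by multiples of $w$ must actually carry it across the cone boundary (so $w$ must not be parallel to that boundary, which has slope $\pm 1$), and it must keep the exit on the correct boundary column of the inside area. Because the portion of $\gamma$ between the entrance neighbourhood and the exit neighbourhood has prescribed net displacement — essentially the vector from entrance to exit, which points along the ``forbidden'' direction — a loop on that portion can be exactly parallel to the cone boundary. To force a usable loop I would use a configuration with \emph{two} exits placed symmetrically just outside the cone, above and below the entrance's level, so that $\gamma$ is additionally forced to travel between the two exit neighbourhoods along a stretch whose net displacement is essentially vertical: for most of it $\gamma$ must stay near the right ``wall'' of the strip, a region where any loop is forced to have zero horizontal component, and a region-by-region case analysis (interior region, wall region, $\#$-region) then guarantees a loop that can be pruned so as to move one exit into the cone while leaving the other, and the entrance, untouched. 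Checking that the pruned walk is still a valid rejecting walk, and handling $\gamma$'s possible excursions into the $\#$-regions and the possible orderings in which it realises its three mandatory visits, are the remaining fiddly-but-routine parts.
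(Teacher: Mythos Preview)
Your overall shape is right: for a $\Pi_1$ automaton, rejection is witnessed by a single finite branch, and a pumping argument on that branch is exactly how the paper proceeds. But your concrete strategy --- fix an exit just outside the cone and try to \emph{move} it inside by pumping --- creates the very obstacle you spend most of the proposal fighting (the direction of the pumpable loop), and your proposed fix with two symmetric exits is not established: the claim that between the two exit neighbourhoods ``$\gamma$ must stay near the right wall'' is unjustified, since $\gamma$ may cross back to the left and re-cross, in which case the between-exits portion again contains interior loops of arbitrary direction. One minor point: Lemma~\ref{lem:inter-SFT} does not let you assume the local patterns are ``already checked'' when proving a \emph{lower} bound; you are handed an arbitrary $\Pi_1$ automaton for $X_{cl}$ and must work with it. (This is harmless, since your test configurations satisfy the local constraints anyway.)

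The paper sidesteps the direction problem with a cleaner choice of bad configuration: take the strip of width $n$ with one entrance and \emph{no} exit at all. Then the rejecting branch $b$ must visit not just one exit position but \emph{every} cell $(n,k)$ with $|k|\le n$ (placing a $1$ at any one of them yields a configuration in $X_{cl}$), as well as $(0,0)$. Now decompose $b$ into left stays, center crossings, and right stays; pump \emph{every} center crossing by a common horizontal multiple $mC$, so that all left stays keep their horizontal positions and all right stays are shifted by exactly $(mC,\cdot)$. This produces a rejecting branch $b'$ on a strip of width $n+mC$, and the crucial point is that $b'$ still has footprint of size $O(|\varphi(b)|)$ on the right boundary column, while the cone of each entrance now offers $2(n+mC)+1$ candidate exit cells. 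For $m$ large enough there is an exit position in the cone that $b'$ never visits; drop a $1$ there (and add matching entrances for the shifted left stays) to obtain $y\in X_{cl}$ on which $b'$ still rejects. No control of the pumping direction is needed --- only horizontal consistency across all crossings, which is achieved by the common-multiple trick.
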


Intuitively, in order for a run to reject a labyrinth with no exit, some individual branch must reject, which requires visiting a region of unbounded size (depending on the width on the labyrinth). By making the width large enough, we disorient this run into rejecting a valid configuration because it cannot check all required cells.

\begin{proof}
Let $A$ be a $\Pi_1$-automaton that rejects all configurations not in $X_{cl}$. We build a configuration $y\in X_{cl}$ that $A$ rejects, the process being illustrated in Figure~\ref{fig-labypump}.

First define a configuration $x^n\notin X_{cl}$ for some $n>|Q|$.  
\[x^n_{(i,j)} = 
\left\{\begin{array}{cl}
1&\text{if }(i,j) = (0,0)\\
0&\text{otherwise, if }0\leq i\leq n\\
\#&\text{otherwise}
\end{array}\right.\] 
Since $A$ rejects $x^n$, there exists a position from which there is no accepting run of $A$; that is, some branch $b$ from some position $p$ rejects (in finite time). The configuration would be valid if we switched the symbols at positions $(0,0)$ or $(n, k)$ for any $-n\leq k\leq n$; therefore $b$ must visit all these positions, since it would otherwise reject a valid configuration.

Within $x^n$, we call "the left" the half-plane $i\leq 0$, "the right" the half-plane $i\geq n$, and "the center" the band $0<i<n$. For simplicity, we assume that $b$ starts in the left and ends in the right. We decompose $b$ as $\ell_0c^{\ell\to r}_0r_0c^{r\to \ell}_0\dots \ell_Tc^{\ell\to r}_Tr_T$, where for all $k = 0, \dots, T$:
\begin{itemize}
\item every subsequence is nonempty; 
\item $\ell_k$ starts in the left, stays in the left and center, and ends in the left (\emph{left stays});
\item $r_k$ starts in the right, stays in the right and center, and ends in the right (\emph{right stays});
\item $c^{\ell\to r}_k$ and $c^{r\to\ell}_k$ stay in the center (\emph{center crossings}).
\end{itemize}

Consider the first crossing $c^{\ell\to r}_0$. Crossing takes at least $n$ steps, so we find two positions $(i,j)$ and $(i+a_0,j+b_0)$ with $a_0 > 0$ that $c^{\ell\to r}_0$ visited in that order in the same state; $(a_0, b_0)$ is the pumping vector. Similarly, we find such pumping vectors $(a_k, b_k)$ with $a_k>0$ for all $c^{\ell\to r}_k$ and $(\alpha_k, \beta_k)$ with $\alpha_k<0$ for all $c^{r\to \ell}_k$.

By pumping on these vectors on each crossing, we build a branch $b'$ that will be valid on some other configuration $y\neq x^n$; for the time being, we only pay attention to the positions in the branch and not to the underlying configuration.

Let $C$ be the lowest common multiple of all $a_k$ and $-\alpha_k$ and let $m>0$ to be fixed later. We define $b' = \ell'_0c'^{\ell\to r}_0r'_0c'^{r\to \ell}_0\dots \ell'_Tc'^{\ell\to r}_Tr'_T$ step by step.

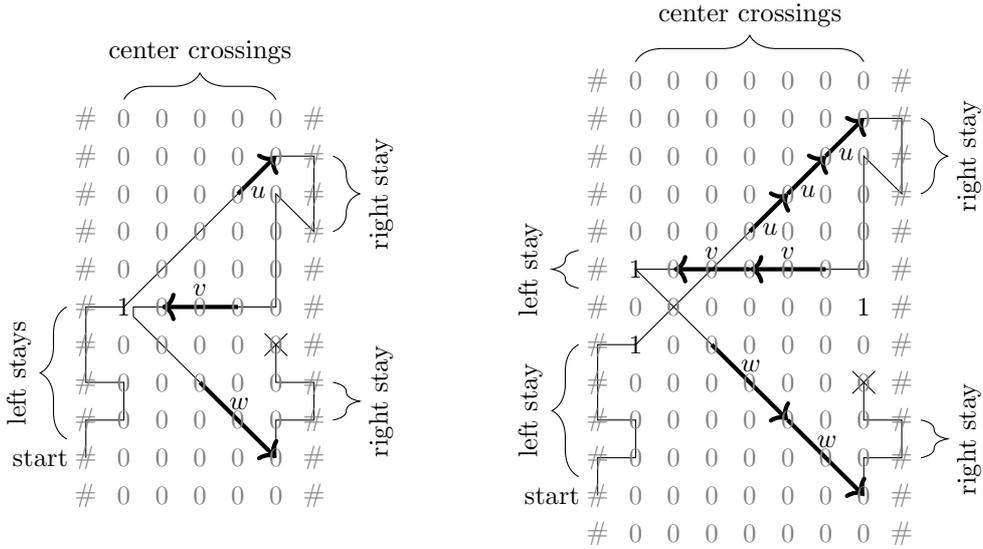
\begin{figure}[t]
    \centering
    \begin{minipage}{0.49\textwidth}
    \pgfmathsetmacro \gs{0.5}

\begin{tikzpicture}


  \node at (-1.2 * \gs, 1 * \gs) {start};

  \draw (0 * \gs, 1 * \gs)
    -- (0 * \gs, 2 * \gs)
    -- (1 * \gs, 2 * \gs)
    -- (1 * \gs, 3 * \gs)
    -- (0 * \gs, 3 * \gs)
    -- (0 * \gs, 5 * \gs)
    -- (1 * \gs, 5 * \gs)
    -- (5 * \gs, 9 * \gs)
    -- (6 * \gs, 9 * \gs)
    -- (6 * \gs, 7 * \gs)
    -- (5 * \gs, 8 * \gs)
    -- (5 * \gs, 5 * \gs)
    -- (4 * \gs, 5 * \gs);
  \draw[->, ultra thick] (4 * \gs, 5 * \gs)
    -- node[above] {$v$} (2 * \gs, 5 * \gs);
  \draw (2 * \gs, 5 * \gs)
    -- (1.25 * \gs, 5 * \gs)
    -- (1.25 * \gs, 4.75 * \gs)
    -- (3 * \gs, 3 * \gs);
  \draw[->, ultra thick] (3 * \gs, 3 * \gs)
    -- node[above, xshift=0.5] {$w$} (5 * \gs, 1 * \gs);
  \draw (5 * \gs, 1 * \gs)
    -- (5 * \gs, 2 * \gs)
    -- (6 * \gs, 2 * \gs)
    -- (6 * \gs, 3 * \gs)
    -- (5 * \gs, 3 * \gs)
    -- (5 * \gs, 4 * \gs);

  \draw[->, ultra thick] (4 * \gs, 8 * \gs)
    -- node[below, xshift=0.5] {$u$} (5 * \gs, 9 * \gs);

  \draw (4.7 * \gs, 3.7 * \gs) -- (5.3 * \gs, 4.3 * \gs);
  \draw (5.3 * \gs, 3.7 * \gs) -- (4.7 * \gs, 4.3 * \gs);

  \draw [decorate,decoration={brace,amplitude=10pt}]
    (-0.5 * \gs, 1.5 * \gs) -- (-0.5 * \gs, 5 * \gs)
    node [midway,above,rotate=90,yshift=10] {left stays};

  \draw [decorate,decoration={brace,amplitude=10pt}]
    (1 * \gs, 10.5 * \gs) -- (5 * \gs, 10.5 * \gs)
    node [midway,above,yshift=10] {center crossings};

  \draw [decorate,decoration={mirror,brace,amplitude=10pt}]
    (6.5 * \gs, 7 * \gs) -- (6.5 * \gs, 9 * \gs)
    node [midway,below,rotate=90,yshift=-10] {right stay};

  \draw [decorate,decoration={mirror,brace,amplitude=10pt}]
    (6.5 * \gs, 2 * \gs) -- (6.5 * \gs, 3 * \gs)
    node [midway,below,rotate=90,yshift=-10] {right stay};


  \foreach \y in {0, ..., 10} {
    \foreach \x in {0, 6} {
      \node[gray] at (\x * \gs, \y * \gs) {$\#$};
    }
  }

  \foreach \y in {0, ..., 10} {
    \foreach \x in {2, 3, 4, 5} {
      \node[gray] at (\x * \gs, \y * \gs) {$0$};
    }
  }

  \foreach \y in {0, ..., 4, 6, 7, 8, 9, 10} {
    \node[gray] at (1 * \gs, \y * \gs) {$0$};
  }
  
  \node at (1 * \gs, 5 * \gs) {$1$};
\end{tikzpicture}
    \end{minipage}
    \begin{minipage}{0.49\textwidth}
    \pgfmathsetmacro \gs{0.5}

\begin{tikzpicture}


  \node at (-1.2 * \gs, 1 * \gs) {start};

  \draw (0 * \gs, 1 * \gs)
    -- (0 * \gs, 2 * \gs)
    -- (1 * \gs, 2 * \gs)
    -- (1 * \gs, 3 * \gs)
    -- (0 * \gs, 3 * \gs)
    -- (0 * \gs, 5 * \gs)
    -- (1 * \gs, 5 * \gs)
    -- (7 * \gs, 11 * \gs)
    -- (8 * \gs, 11 * \gs)
    -- (8 * \gs, 9 * \gs)
    -- (7 * \gs, 10 * \gs)
    -- (7 * \gs, 7 * \gs)
    -- (1 * \gs, 7 * \gs)
    -- (7 * \gs, 1 * \gs)
    -- (7 * \gs, 2 * \gs)
    -- (8 * \gs, 2 * \gs)
    -- (8 * \gs, 3 * \gs)
    -- (7 * \gs, 3 * \gs)
    -- (7 * \gs, 4 * \gs);

  \draw[->, ultra thick] (4 * \gs, 8 * \gs)
   -- node[below, xshift=0.5] {$u$} (5 * \gs, 9 * \gs);
  \draw[->, ultra thick] (5 * \gs, 9 * \gs)
   -- node[below, xshift=0.5] {$u$} (6 * \gs, 10 * \gs);
  \draw[->, ultra thick] (6 * \gs, 10 * \gs)
   -- node[below, xshift=0.5] {$u$} (7 * \gs, 11 * \gs);

  \draw[->, ultra thick] (6 * \gs, 7 * \gs)
    -- node[above] {$v$} (4 * \gs, 7 * \gs);
  \draw[->, ultra thick] (4 * \gs, 7 * \gs)
    -- node[above] {$v$} (2 * \gs, 7 * \gs);

  \draw[->, ultra thick] (3 * \gs, 5 * \gs)
    -- node[above, xshift=0.5] {$w$} (5 * \gs, 3 * \gs);
  \draw[->, ultra thick] (5 * \gs, 3 * \gs)
    -- node[above, xshift=0.5] {$w$} (7 * \gs, 1 * \gs);

  \draw (6.7 * \gs, 3.7 * \gs) -- (7.3 * \gs, 4.3 * \gs);
  \draw (7.3 * \gs, 3.7 * \gs) -- (6.7 * \gs, 4.3 * \gs);

  \draw [decorate,decoration={brace,amplitude=10pt}]
    (-0.5 * \gs, 1.5 * \gs) -- (-0.5 * \gs, 5 * \gs)
    node [midway,above,rotate=90,yshift=10] {left stay};

  \draw [decorate,decoration={brace,amplitude=10pt}]
    (-0.5 * \gs, 6.5 * \gs) -- (-0.5 * \gs, 7.5 * \gs)
    node [midway,above,rotate=90,yshift=10] {left stay};

  \draw [decorate,decoration={brace,amplitude=10pt}]
    (1 * \gs, 12.5 * \gs) -- (7 * \gs, 12.5 * \gs)
    node [midway,above,yshift=10] {center crossings};

  \draw [decorate,decoration={mirror,brace,amplitude=10pt}]
    (8.5 * \gs, 9 * \gs) -- (8.5 * \gs, 11 * \gs)
    node [midway,below,rotate=90,yshift=-10] {right stay};

  \draw [decorate,decoration={mirror,brace,amplitude=10pt}]
    (8.5 * \gs, 2 * \gs) -- (8.5 * \gs, 3 * \gs)
    node [midway,below,rotate=90,yshift=-10] {right stay};


  \foreach \y in {0, ..., 12} {
    \foreach \x in {0, 8} {
      \node[gray] at (\x * \gs, \y * \gs) {$\#$};
    }
  }

  \foreach \y in {0, ..., 12} {
    \foreach \x in {2, ..., 6} {
      \node[gray] at (\x * \gs, \y * \gs) {$0$};
    }
  }

  \foreach \y in {0, ..., 4} {
    \node[gray] at (1 * \gs, \y * \gs) {$0$};
  }

  \node at (1 * \gs, 5 * \gs) {$1$};
  \node[gray] at (1 * \gs, 6 * \gs) {$0$};
  \node at (1 * \gs, 7 * \gs) {$1$};

  \foreach \y in {8, ..., 12} {
    \node[gray] at (1 * \gs, \y * \gs) {$0$};
  }

  \foreach \y in {0, ..., 5} {
    \node[gray] at (7 * \gs, \y * \gs) {$0$};
  }
  
  \node at (7 * \gs, 6 * \gs) {$1$};

  \foreach \y in {7, ..., 12} {
    \node[gray] at (7 * \gs, \y * \gs) {$0$};
  }
\end{tikzpicture}
    \end{minipage}
    \caption{\label{fig-labypump}
        On the left, a rejecting run over the invalid labyrinth $x^4$, which has no exit.
        Bolded parts $\vec{u}$, $\vec{v}$ and $\vec{w}$ represent pumping vectors which start and end in the same state.        
        These vectors are repeated on the right to provide another rejecting run of the same automata over the valid labyrinth $y$.
      }
      
  \end{figure}

\begin{itemize}
\item $\ell'_0 = \ell_0$ is unchanged;
\item $c'^{\ell\to r}_0$ is $c^{\ell\to r}_0$ that we pump $\frac{mC}{a_0}$ times along the vector $(a_0, b_0)$.
\item $r'_0 = \sigma^{(mC, \frac{mC}{a_1} b_1)}(r_0)$; in other words, $r'_0$ is shifted relative to $r_0$ so that the positions are consistent with the previous part.
\item $c'^{r\to \ell}_0$ is $c^{r\to \ell}_0$ shifted by $(mC, \frac{mC}{a_1} b_1)$ and pumped $\frac{mC}{-\alpha_0}$ times along the vector $(\alpha_0, \beta_0)$.
\end{itemize}

In a similar manner, we pump every crossing in the center and shift: 

\begin{itemize}
\item $\ell'_k$ and $c'^{\ell\to r}_k$ by $(0, o^\ell_k(m))$, where $o^\ell_k(m) = \Sigma_{i = 0}^{k-1} \frac{mC}{a_i} b_i + \Sigma_{i = 0}^{k-1} \frac{mC}{-\alpha_i} \beta_i$;
\item $r'_k$ and $c'^{r\to \ell}_k$ by $(mC, o^r_k(m))$ where $o^r_k(m) = \Sigma_{i = 0}^{k} \frac{mC}{a_i} b_i + \Sigma_{i = 0}^{k-1} \frac{mC}{-\alpha_i} \beta_i$.
\end{itemize}
These values are chosen so that the positions are locally consistent with allowed transitions in $A$. Notice that $o^\ell_k(m) - o^\ell_{k'}(m)$ is either always $0$ or tends to $\pm\infty$ as $m\to \infty$. 

Denoting $\varphi$ the footprint, we see that $\varphi(\ell'_k) = \varphi(\ell_k) + o^\ell_k(m)$. Because every $\varphi(\ell_k)$ is finite, we can find $m$ large enough such that, for all $k$ and $k'$, $\varphi(\ell'_k) \cap \varphi(\ell'_{k'}) = \emptyset$ or $\varphi(\ell'_k) \cap \varphi(\ell'_{k'}) = \varphi(\ell_k) \cap \varphi(\ell_{k'})$, and similarly for right stays. In other words, two stays either visit no cell in common or they cross in exactly the same manner as the original branch.

If needed, we increase $m$ so that $m>|\varphi(b)|$.

Now we build a configuration $y$ so that $b'$ is a branch of the run starting at the same position $p$. Start by putting $y = x^{n + mC}$, and do the following modifications: 

\begin{enumerate}
\item set $y_{(0, o^\ell_k(m))} = 1$ for all $k$;
\item choose some $i$ such that $|i|<n+Cm$ and $(n+Cm, i) \notin \bigcup_k\varphi(r'_k)$. Then set $y_{(n,i+o^\ell_k(m))} = 1$ for all $k$. 
\end{enumerate}

Condition 1 adds entrances at such positions that the left stay $\ell'_k$ "sees" an entrance exactly when the original left stay $\ell_k$ saw an entrance at position $(0,0)$. This does not impact other left stays by the argument above.

Condition 2 adds exits at positions that are not visited by $b'$, which is possible because we chose $m$ to be larger than the footprint of $b$, but satisfy the definition for a valid labyrinth. This ensures that $y\in X_{cl}$ and that $b'$ rejects.

By construction, the branch $b'$ is a branch for the run on $y$ starting at $p$. It follows that $y\in X_{cl}$ is rejected by $A$, a contradiction.
\end{proof}

\section{Conclusion}

\subsection{Summary and open questions}
We proved that subshifts accepted by alternating plane-walking automata form a strict subset of sofic subshifts, and that the first level of the hierarchy with bounded alternations is strict: $\Sigma_1$ and $\Pi_1$ are incomparable, and thus the inclusion of $\Delta_1$ in both of them is strict.

We sum up our open questions:
\begin{enumerate}
\item Is the hierarchy strict for all $n$ or does it collapse at some level? For example, can we find a subshift in $\Sigma_{2} \backslash \Sigma_{1}$?
\item If a subshift is accepted by a universal PWA and an existential PWA, is it also accepted by a deterministic PWA? More generally, is it the case that $\Delta_n = \Sigma_n \cap \Pi_n$? 
\item Is there an equivalent definition for subshifts accepted by alternating plane-walking automata by forbidden patterns accepted by plane-walking automata that do not require taking a complement?
\end{enumerate}

Pumping arguments are tedious even in the first floors of the hierarchy, and we would like to find other tools, for example (as suggested by Guillaume Theyssier and inspired by \cite{terrier19}) from communication complexity.

This is only one of multiple possible hierarchies on walking automata. We mention $n$-nested automata in the next section. Automata with the ability to leave up to $n$ pebbles (non-moving marks used as memory) during a run have also been considered. Salo and Törma studied automata with multiple heads and this hierarchy collapses at $n=3$ \cite{planewalking,salo2020four}.

Definitions~\ref{def:recognising} and \ref{def:PWA} (recognised / accepted subshifts) extend to higher dimensions $\mathbb Z^d, d\geq 2,$ or any finitely generated group by replacing the set of directions by $S$ and $S\cup S^{-1}\cup\{\bullet\}$, respectively, where $S$ is a set of generators of the group. While our main results extend directly to $\mathbb Z^d$ by considering lifts of our two-dimensional examples (see Definition~\ref{def:lift}), we make no guesses as to the situation in more complicated groups.

\subsection{Strict Hierarchy and Tree-walking Automata}\label{sec:treewalking}

\begin{conjecture}
The hierarchy is strict, that is, $\Sigma_n \subsetneq \Sigma_{n+1}$ for all $n$ (and the same is true for all combinations of $\Pi, \Sigma$ and $\Delta$).
\end{conjecture}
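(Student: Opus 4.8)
We sketch a possible line of attack. The plan would be to reproduce the mechanism of Theorem~\ref{thm:main} at every level of the hierarchy: for each $n$, build a subshift $X_n$ together with an explicit $\Sigma_n$ automaton recognising it, and then prove a lower bound ruling out any $\Sigma_{n-1}$ automaton for $X_n$; the analogous examples for the $\Pi$- and $\Delta$-levels, obtained by interchanging $\exists$ and $\forall$, would then give the full statement. A candidate for $X_n$ is a ``depth-$n$ nested labyrinth'': a configuration in which each room of a level-$k$ labyrinth in the sense of $X_{cl}$ carries an independent copy of a level-$(k-1)$ structure in which the roles of entrances and exits --- and hence of the $\exists$ and $\forall$ quantifiers --- are interchanged, validity being defined recursively down to depth $n$. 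Designing $X_n$ so that precisely $n$ quantifier blocks are needed is already a nontrivial part of the work; alternatively one could encode depth-$n$ trees of alternating $\exists/\forall$ nodes into two-dimensional configurations, along the lines of the tree-walking connection in Section~\ref{sec:treewalking}. Membership $X_n \in \Sigma_n$ should follow by nesting the construction of Proposition~\ref{prop-oui-sigma}: an outer existential layer walks the level-$n$ labyrinth as in Figure~\ref{fig-cl-automata} and, upon entering a sub-structure, passes control through an automaton decomposition to a $\Pi_{n-1}$ subautomaton, and so on; the well-formedness constraints are free by Lemma~\ref{lem:inter-SFT}.

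The substance is the lower bound $X_n \notin \Sigma_{n-1}$. I would first recast acceptance by a $\Sigma_k$ automaton as a two-player safety game on a run, in which a Verifier resolves the $\exists$-choices and a Falsifier the $\forall$-choices, alternating in at most $k$ blocks, so that ``too few blocks'' translates into a concrete limitation. The key tool would be a $k$-fold pumping lemma generalising Propositions~\ref{prop-non-sigma} and~\ref{prop-non-pi}: along any play on an instance of the relevant type one extracts $k$ nested pumpable segments, one per quantifier block, whose pumping vectors can be combined so as to translate whole sub-branches arbitrarily far apart. Applied to a hypothetical $\Sigma_{n-1}$ automaton for $X_n$, this forces some branch to commit to a decision before it has inspected the innermost, depth-$n$ sub-labyrinth; pumping that branch then yields either a valid configuration that is rejected or an invalid one that is accepted, the desired contradiction. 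Swapping $\exists$ and $\forall$ handles the $\Pi$-classes, and combining the two handles $\Delta_n$ versus $\Delta_{n+1}$.

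The main obstacle is this lower bound. Already the one-alternation pumping arguments in the proofs of Propositions~\ref{prop-non-sigma} and~\ref{prop-non-pi} are delicate, and with $n$ blocks one must control simultaneously the interaction of $n$ pumping vectors, the footprints of linearly many interleaved ``stays'' and ``crossings'', and the adversary's freedom to route the run through the nested rooms; making all of this go through uniformly in $n$ is where the difficulty concentrates, and it is quite possible that pumping alone is insufficient --- the analogous strictness for the alternating $4$-way picture-automaton hierarchy on finite words is, as far as we know, still open. For that reason I would pursue in parallel a second approach based on a communication-complexity lower bound in the spirit of~\cite{terrier19}, in which the $n$ quantifier blocks are made to correspond to $n$ rounds of communication between automata guarding disjoint regions of the configuration, or on a transfer of the known strictness of the pebble hierarchy for tree-walking automata across a suitable encoding of trees into two-dimensional configurations.
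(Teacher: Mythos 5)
The statement you were given is stated in the paper only as a conjecture: the paper contains no proof of it, only supporting evidence (the strictness of the $k$-nested tree-walking automaton hierarchy of ten Cate--Segoufin cited in Section~\ref{sec:treewalking}, and a proposed but unexecuted encoding of tree languages into subshifts drawn on a background of zeroes). Your proposal is likewise not a proof, and you say so yourself: everything hinges on the lower bound $X_n \notin \Sigma_{n-1}$, and for that you only describe a hoped-for ``$k$-fold pumping lemma'' without defining the nested labyrinth subshift precisely, without stating the lemma, and without indicating how one would control the interaction of $n$ pumping vectors with the universal player's choices, which may adapt to the pumped positions --- exactly the difficulty that leads the paper to leave strictness open even for $\Sigma_1$ versus $\Sigma_2$, and which (as the paper remarks) is open even for finite pictures. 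So what you have is a research programme consistent with the paper's own outlook, not an argument that could be checked.

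Two further cautions. First, your claim that the $\Pi$- and $\Delta$-cases follow ``by interchanging $\exists$ and $\forall$'' is too quick: acceptance of a subshift by a PWA is not self-dual (a configuration is accepted when \emph{every} position admits an accepting run, and these classes are not closed under complement), which is why already at level one the paper needs two unrelated witnesses --- $X_{ssu}$ for $\Pi_1\setminus\Sigma_1$ and $X_{cl}$ for $\Sigma_1\setminus\Pi_1$ --- with two genuinely different pumping arguments; a correct proof at level $n$ would presumably need the same doubling of work. Second, your fallback routes (communication complexity in the spirit of \cite{terrier19}, or transferring the known strict hierarchies for tree-walking automata through an embedding of trees into the plane) are precisely the directions the paper itself lists as future work, and the paper points out a concrete obstruction to the transfer: a plane-walking automaton can walk off the embedded tree, and showing that this yields no extra power is itself a nontrivial pumping problem. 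In short, the key idea needed to settle the conjecture is missing from your proposal, just as it is missing from the paper.
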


We present some elements supporting this conjecture.
Tree-walking automata (TWA) is a similar model that recognise words written on (finite or infinite) trees. For example, Theorem~\ref{thm:soficAlt} can be seen as the translation of \cite{bojanczyk2005}.

In the context of tree-walking automata, we could not find any work on a similar $\Sigma / \Pi$ hierarchy. However, \cite{Cate2010} considers $k$-nested TWA, defined intuitively as follows: $0$-nested TWA are (existential) nondeterministic TWA; $k$-nested TWA are nondeterministic TWA where, at each step, the set of available transitions is determined by foreseeing the behaviour of two $k-1$-nested TWA $A$ and $\overline{A}$, in the following sense: the next transition is chosen nondeterministically among transitions after which $A$ would accept \emph{and} $\overline{A}$ would reject.

The class of (tree-walking or plane-walking) $k$-nested automata seems to be related to $\Sigma_{k+1}$ automata, although we do not have a proof of either direction.
For tree-walking automata, Theorem~29 in \cite{Cate2010} proves that the hierarchy of languages recognized by $k$-nested TWA is strict. This is to us a strong indication that the alternating hierarchy is strict on trees (free groups).

We believe that these results can be brought to plane-walking automata by considering subshifts where trees are drawn on a background on zeroes (this can be ensured with finitely many forbidden patterns), and every tree must belong to $L_k$, where $L_k$ is a tree language that is $\Sigma_{k+1}$ but not $\Sigma_{k}$ (alternatively, $k$-nested and not $k-1$-nested). This is not straightforward as plane-walking automata have the ability to walk out of the tree, which should not provide additional recognition power, but pumping arguments are very tedious for alternating plane-walking automata. We leave this as an open question for future research.

\subsection{An Alternative Approach: Kari-Moore's Rectangles.}\label{sec:squares}

We mention an alternative approach to prove that $\Sigma_1\neq\Pi_1$ that was used in \cite{KariMoore} for finite patterns. The following statements are only translations to our model and from finite patterns to periodic configurations, and we do not vouch for their correctness. This is another suggestion for future work generalising Theorem~\ref{thm:main}.

Let $f : \N\to\mathcal P(\N)$. Let $X_f \subset \{0,1\}^{\Z^2}$ be the smallest subshift containing the strongly periodic configurations generated by the rectangles: 
\[\{ r \in \{0,1\}^{[0,n]\times [0,k]} : k\in f(n) \text{ and for all }i,j>0, r_{0,0} = r_{i,0} = r_{0,j} = 1, r_{i,j} = 0\}.\]

If $X_f$ is accepted by a $\Sigma_1$ plane-walking automaton with $k$ states, then for every $n$, the language $\{1^j : j\in f(n)\}$ is recognised by a two-way nondeterministic finite automaton with $kn$ states, and hence regular \cite{KariMoore}.

The following example of a function $f$ is such that $X_f \in \Sigma_1\backslash\Pi_1$, while $X_{f^{c}} \in \Pi_1\backslash\Sigma_1$:
\[f(n) = \{in+j : i,j\in\N, j<i\}.\]
Indeed, $f^{c}(n)$ is a finite set whose largest element is $(n-2)\cdot n + (n-1) \approx n^2$. If it is accepted by an automaton with $kn$ states, this would be a contradiction for $n>k$ by pumping.

\bibliography{biblio.bib}
\end{document}